\newtheorem{theorem}{Theorem}
\newtheorem{corollary}{Corollary}[theorem]
\newtheorem{lemma}[theorem]{Lemma}
\newtheorem{proposition}[theorem]{Proposition}
\newtheorem*{lemma*}{Lemma}
\newtheorem*{proposition*}{Proposition}
\DeclareMathOperator{\diff}{d \!}
\DeclareMathOperator\Tr{Tr}
\DeclareMathOperator\id{id}
\providecommand{\od}[3][]{\ensuremath{
\ifinner
\tfrac{\diff{^{#1}}#2}{\diff{{#3}^{#1}}}
\else
\dfrac{\diff{^{#1}}#2}{\diff{{#3}^{#1}}}
\fi
}}
\providecommand{\dod}[3][]{\ensuremath{\mathinner{
\dfrac{\diff{^{#1}}#2}{\diff{{#3}^{#1}}}
}}}
\newcommand\ket[1]{|#1\rangle}
\newcommand\bra[1]{\langle#1|}
\newcommand\ketbra[2]{|#1\rangle\langle#2|}
\newcommand\norm[1]{\left\lVert#1\right\rVert}
\newcommand{\setbuilder}[3][\null]{%
  \ifx#1\null
       { \left\{ #2 \;\middle|\; #3 \right\} }%
    \else%
       { #1\{ #2 \;#1|\; #3 #1\} }%
    \fi}
\newcommand{\defeq}{\coloneqq}
\title{Eigenpath traversal by Poisson-distributed phase randomisation}
\author{Joseph Cunningham\footnote{Centre for Quantum Information and Communication (QuIC), \'{E}cole polytechnique de Bruxelles,
Universit\'e libre de Bruxelles} \and Jérémie Roland\footnotemark[1]}
\date{}
\begin{document}
\maketitle 

\begin{abstract}
We present a framework for quantum computation, similar to Adiabatic Quantum Computation (AQC), that is  based on the quantum Zeno effect. By performing randomised dephasing operations at intervals determined by a Poisson process, we are able to track the eigenspace associated to a particular eigenvalue.

We derive a simple differential equation for the fidelity, leading to general theorems bounding the time complexity of a whole class of algorithms. We also use eigenstate filtering to optimise the scaling of the complexity in the error tolerance $\epsilon$.

In many cases the bounds given by our general theorems are optimal, giving a time complexity of  $O(1/\Delta_m)$ with $\Delta_m$ the minimum of the gap. This allows us to prove optimal results using very general features of problems, minimising the problem-specific insight necessary.

As two applications of our framework, we obtain optimal scaling for the Grover problem (i.e.\ $O(\sqrt{N})$ where $N$ is the database size) and the Quantum Linear System Problem (i.e.\ $O(\kappa\log(1/\epsilon))$ where $\kappa$ is the condition number and $\epsilon$ the error tolerance) by direct applications of our theorems.
\end{abstract}

\section{Introduction}
It has long been appreciated that the ability to prepare a ground state of a given Hamiltonian is useful for a large number computational tasks. Many NP-hard problems, including various types of partitioning, covering, and satisfiability problems, can be solved by finding the ground state of an Ising system \cite{lucasIsing}. There are also many applications in the fields of quantum chemistry, where finding the ground state of molecules is a common task, and physics, where knowledge of the ground state helps to understand low-temperature phenomena such as superconductivity and superfluidity.

For computational problems we have the following strategy: (1) find a physical system such that the ground state encodes useful information for solving the problem,  (2) prepare the ground state using some physical process and (3) use the information contained in the ground state to solve the problem. This paper is about performing the second step of this strategy.

The most famous way to perform the second step is known as Adiabatic Quantum Computation (AQC) \cite{farhiAQC}. Suppose $H_P$ is the Hamiltonian whose ground state is of interest. This procedure requires a second Hamiltonian, $H_0$, with an easily preparable ground state.  Now consider the following interpolated Hamiltonian: $H(s) = (1-s)H_0 + sH_P$ and pick a large time $T$. We start with the system in the ground state of $H_0$ and evolve according to the time-dependent Hamiltonian $H(t/T)$, for time $t\in [0,T]$. The adiabatic theorem says that if $T$ is large enough, then the resulting state will be close to the ground state of $H(1) = H_P$. See \cite{jansenAdiabaticTheorem} for results detailing how large $T$ has to be. Clearly we want to take $T$ as small as possible, since a larger $T$ means our computation takes longer.

While AQC is polynomially equivalent to the quantum circuit model \cite{AQCequivalent}, it suffers from a few drawbacks. The most significant one being that it requires the system to evolve under a very specific time-dependent Hamiltonian. It is typically very hard to physically implement a system that evolves under exactly this Hamiltonian. Often the complicated time-dependent dynamics are approximated by a sequence of simpler evolutions, which introduces discretisation error. In particular this is necessary when implementing AQC on a conventional quantum computer. Bounding the discretisation error analytically is typically hard to do. In contrast, our method only requires the evolution under a finite number of time-independent Hamiltonians for finite time and thus has no discretisation cost.

There exist alternatives to AQC that are also based on an interpolation $H(s)$ between a Hamiltonian whose ground state is easy to prepare and one whose ground state is difficult to prepare. These approaches use alternate ways to transform the ground state of $H(0)$ into that of $H(1)$, or, more generally some eigenstate of $H(0)$ into the corresponding eigenstate of $H(1)$.
They often make use of a variant of the quantum Zeno effect. For instance \cite{qSearchMeasurement} uses measurement and \cite{eigenpathTraversalPR} simulates the quantum Zeno effect by applying Hamiltonian evolutions for random amounts of time in a procedure known as the Randomisation Method (RM).

Our framework builds on the RM of \cite{eigenpathTraversalPR} in the following way: instead of performing a fixed sequence of phase randomising steps, we stochastically choose when to perform phase randomisation, based on a Poisson process with rate $\lambda(s)$, see algorithm \ref{procedure}.
This has a number of advantages. Firstly it yields a simple differential equation for the state evolution, which fits in the general framework of non-unitary adiabatic theorems of \cite{adiabaticTheoremsContractingEvolutions}, and greatly simplifies the analysis. It allows us to obtain general theorems, see in particular theorems \ref{theorem:constantRate} and \ref{theorem:adaptiveRate}, that in many cases yield optimal results with minimal extra work or problem-specific insight.
Also, we only need very minimal technical assumptions on $H(s)$: it only needs to be twice continuously differentiable and we need to know some estimate of the gap between the eigenvalue of interest and the rest of the spectrum. We do not assume precise knowledge of the spectrum or gap. We allow the eigenspace of interest to be degenerate.

Our theorem \ref{theorem:constantRate} deals with the case where the rate of the Poisson process $\lambda$ is taken to be constant. The result we obtain is better than the corresponding result for AQC with a constant-speed linear interpolation. In theorem \ref{theorem:adaptiveRate} we describe a variable $\lambda(s)$ that can significantly improve the time complexity, up to $O(1/\Delta_m)$ in the minimum gap $\Delta_m$. Finally theorem \ref{eigenstateFiltering} improves the dependence of the time complexity on the error tolerance. Typically algorithms based on AQC and the RM have a complexity that scales as $O(1/\epsilon)$ in the error tolerance $\epsilon$. Eigenstate filtering, introduced in \cite{eigenstateFiltering}, can be used to reduce this to $O(\log(1/\epsilon))$. This has been used before to RM-inspired algorithms that use the circuit model, see \cite{eigenstateFiltering} and \cite{QLSPwithPRnumerics}, but we provide a version native to our cost model.

From theorem \ref{theorem:adaptiveRate}, we see that the following property is very useful to obtain fast algorithms: $\int_0^1\frac{1}{\Delta(s)^p}\diff{s} = O(\Delta_m^{1-p})$, where $\Delta$ is the gap, $\Delta_m = \inf_{s\in[0,1]}\Delta$ and $p>1$. This property seems to be quite generic, in particular it holds for both the Grover search problem and the Quantum Linear System Problem (QLSP).

In the Grover search problem, \cite{Grover}, the goal is to prepare a specific state in an $N$-dimensional space with the help of an oracle. It is well-known that this can be done in $O(\sqrt{N})$ queries to the oracle. When AQC was first used to tackle this problem, a complexity of $O(N)$ was obtained \cite{farhiAQC}. The trick to achieving a complexity of $O(\sqrt{N})$ was to use an adapted schedule \cite{RolandLocalAQC}, \cite{otherLocalAQC}. In our framework, the algorithm using constant $\lambda$ already achieves a scaling of $O\big(\sqrt{N}\log(N)\big)$ by theorem \ref{theorem:constantRate}, which is significantly better than the corresponding case for AQC. Using a variable $\lambda(s)$, we recover the scaling $O(\sqrt{N})$ by theorem \ref{theorem:adaptiveRate}.
It is interesting to note that, by the generality of our theorems, we actually obtain a whole family of schedules, parametrised by some value $0<q<1$, that solve the problem optimally. This is analogous to the range of adiabatic schedules considered in \cite{An_QLSP} and \cite{QLSPdiscreteAdiabaticTheorem}. The original schedule of \cite{RolandLocalAQC} actually corresponds to a choice of $q=1$. It seems like the RM can be considered as the $q=1$ case of a family of methods, at least in the case of linear interpolation. This falls outside the range of our theorem, but it turns out that $q=1$ is in fact good enough to give optimality for the Grover problem, which explains why the RM was already known to be able to perform Grover search with optimal complexity $O(\sqrt{N})$, \cite{eigenpathTraversalPR}.

In general, for other problems, $q=1$ does not give optimal scaling.
The Quantum Linear System Problems (QSLP) is an example of a problem where $q=1$ does not work, neither in our framework, nor in AQC \cite{An_QLSP}.
In QLSP, \cite{HHL}, the goal is to prepare a quantum state $\ket{x}$ that is proportional to the solution of a system of linear equations $Ax = b$. In \cite{QLSPwithPR} the randomisation method was used to construct an algorithm with complexity $O\big(\kappa \log(\kappa)/\epsilon\big)$, where $\kappa$ is the condition number and $\epsilon$ the error tolerance. This is improved to $O\big(\kappa \log(\kappa/\epsilon)\big)$ in \cite{QLSPwithPRnumerics}. In \cite{QLSPdiscreteAdiabaticTheorem} an algorithm based on a discrete adiabatic theorem was proposed which scales as $O\big(\kappa\log(1/\epsilon)\big)$. This is known to be optimal \cite{HHL}. We are able to match this in our framework.

There has actually been some discussion recently on the merits of these two approaches to QLSP, i.e.\ the approach based on the RM of \cite{QLSPwithPR} and the approach based on the discrete adiabatic theorem \cite{QLSPdiscreteAdiabaticTheorem}. The approach based on the discrete adiabatic theorem has the better asymptotic scaling, but it turns out that the proven complexity for reasonable values of $\kappa$ is very large. The paper \cite{QLSPwithPRnumerics} presents an algorithm that is based on the RM and has a better proven complexity for reasonable values of $\kappa$, but is asymptotically suboptimal. Finally \cite{QLSPdiscreteAdiabaticTheoremNumerics} uses numerical methods to determine the actual performance of the algorithm based on the discrete adiabatic theorem. They claim that it works much better than the proven bound and in fact better than the algorithm based on the RM.
We can contribute to this discussion by noting that our framework gives an algorithm that is based on the RM and has optimal asymptotic scaling. In addition, since the RM seems to correspond to $q=1$, which we know to be suboptimal, it is likely that the algorithm of \cite{QLSPwithPRnumerics} can be made asymptotically optimal by changing the scheduling.

\subsection{General setup}
We assume we have a physical system and a set of (time-independent) Hamiltonians, i.e.\ self-adjoint operators, such that we can evolve the system under $e^{-itH}$ at a cost of $t$ for any Hamiltonian $H$ in this set.\footnote{We set $\hbar = 1$.} We call the Hamiltonians in this set admissible. Which Hamiltonians are admissible will depend on the device or setup, but typically they will be bounded in norm.

This is not the cost model used by references \cite{QLSPdiscreteAdiabaticTheorem} and \cite{QLSPwithPRnumerics}, which use a query complexity rather than a time complexity. We discuss a translation of our results to this model using optimal Hamiltonian simulation in appendix \ref{appendix:circuitModel}. The asymptotic complexities are mostly unaffected, but there are different constants involved.

For a given instance of a problem, we assume that we have a continuous, twice differentiable path of admissible Hamiltonians $H(s)$, where $s\in [0,1]$. We also assume that we can prepare the ground state of $H(0)$.

We are interested in the asymptotic scaling of the time complexity, as measured by the total length of time we apply unitaries of the form $e^{-itH}$. We produce theorems that give bounds on the complexity in terms of the spectral gap and the derivatives $\norm{H'}$ and $\norm{H^{\prime\prime}}$.

Our main tool will be the randomised application of unitaries of the form $e^{-itH}$. Since we are using classical randomness, it will be useful to use the density matrix formalism. Using this formalism, we can derive differential equations for these new procedures that share essential features with the Liouville–von Neumann equation in the adiabatic limit. This allows us to use many of the same mathematical tricks to study these procedures and we can derive ``generalised'' adiabatic theorems in the sense of \cite{adiabaticTheoremsContractingEvolutions}.

\subsubsection{Cost and error model}
It is clear what the cost of one run of the algorithm is: it is just the total time spent evolving the system under some Hamiltonian. In order to state the time complexity we have the additional problem that the running time of the algorithm is not deterministic. That is, even for a fixed input, multiple runs of the algorithm will take different amounts of time. Our time complexity uses the expected run time of each input. Thus we say our algorithm has time complexity $T$ if, for all relevant inputs $I$, the expected time taken by the algorithm with input $I$ is less than $T$. In other words, we may consider this a worst-case expected-time complexity.

In order to guarantee that the algorithm does not take too long, we could abort if the chosen amount of time was too long. This would yield an additional error, which can be bounded by Markov's inequality.

Our algorithms are also not guaranteed to give the correct answer, rather we aim to produce the target state with at least a certain target fidelity.

\subsubsection{Technical assumptions on the spectrum}
\label{assumptions}

We assume the existence of the following objects: a number $\Delta_m >0$ and functions $\omega_0: [0,1]\to \mathbb{R}$ and $\Delta: [0,1]\to [0,1]$ such that
\begin{itemize}
\item $\omega_0$ continuous;
\item $\omega_0(s)$ is an eigenvalue of $H(s)$ for all $s\in [0,1]$;
\item $\Delta(s) \geq \Delta_m$ for all $s\in [0,1]$;
\item the intersection of $[\omega_0(s) - \Delta(s), \omega_0(s) + \Delta(s)]$ with the spectrum of $H(s)$ is exactly $\{\omega_0(s)\}$.
\end{itemize}
Let $P(s)$ be the projector on the eigenspace associated to the eigenvalue $\omega(s)$. We also set $Q(s) = \id - P(s)$.

In order to perform our algorithm, we assume knowledge of $\Delta$, which bounds the gap. We do not assume more detailed knowledge of the gap, $\omega_0$, or any other part of the spectrum.

\section{Poisson-distributed phase randomisation}
Our algorithms are built using a finite number of steps, where at each step a Zeno-like dephasing operation is performed. This dephasing operation is given by the following proposition:

\begin{proposition}[Phase randomisation \cite{eigenpathTraversalPR}] \label{phaseRandomisation}
Let $H$ be a Hamiltonian and $\omega_0, \Delta, P$ and $Q$ as above. Assume we can simulate $e^{-itH}$ for any positive or negative time $t$ at a cost of $|t|$. Then we can construct a stochastic variable $\tau$ such that for all states $\rho$,
\begin{equation}
\langle e^{-i\tau H}\rho e^{i\tau H}\rangle = P\rho P + Q\langle e^{-i\tau H}\rho e^{i\tau H} \rangle Q,
\end{equation}
with cost $\langle |\tau|\rangle = t_0/\Delta$, where $t_0 = 2.32132$. 
\end{proposition}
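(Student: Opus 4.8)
The plan is to reduce the operator identity to a scalar condition on the distribution of $\tau$ and then solve an extremal problem for that distribution. First I would use the spectral gap to split $H$ into blocks. Since the eigenspace of $\omega_0$ is invariant, $H$ commutes with $P$ and $Q$, so we may write $H = \omega_0 P + QHQ$ and hence
\begin{equation}
e^{-i\tau H} = e^{-i\tau\omega_0}P + e^{-i\tau H_Q}Q, \qquad H_Q \defeq QHQ.
\end{equation}
Substituting this into $e^{-i\tau H}\rho e^{i\tau H}$ and expanding into the four blocks $P\cdot P$, $P\cdot Q$, $Q\cdot P$ and $Q\cdot Q$, the diagonal blocks behave correctly for every fixed $\tau$: the phases cancel in $P\rho P$, and the $Q\cdot Q$ block stays inside the range of $Q$. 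The claimed identity is therefore equivalent to the statement that the two off-diagonal (coherence) blocks vanish after averaging over $\tau$.

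Next I would translate the vanishing of the coherences into a condition on the characteristic function $\chi(\mu)\defeq\langle e^{i\mu\tau}\rangle$. The $P\cdot Q$ block equals $P\rho Q\,e^{i\tau(H_Q-\omega_0)}$, whose average is $P\rho Q\,\langle e^{i\tau(H_Q-\omega_0)}\rangle$. Since $P\rho Q$ ranges over all operators from the range of $Q$ to that of $P$ as $\rho$ varies, this vanishes for every $\rho$ precisely when $\langle e^{i\tau(H_Q-\omega_0)}\rangle = 0$ on the range of $Q$; by the spectral assumption $H_Q-\omega_0$ has spectrum contained in $\{\mu : |\mu|\ge\Delta\}$, so this holds iff $\chi(\mu)=0$ for all $|\mu|\ge\Delta$. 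The task is thus to exhibit a probability density $p$ for $\tau$ whose Fourier transform is supported in $[-\Delta,\Delta]$, while keeping $\langle|\tau|\rangle = \int_\R|\tau|\,p(\tau)\,\diff{\tau}$ as small as possible. A scaling substitution $\tau\mapsto\tau/\Delta$ shows that the optimal density for general $\Delta$ is obtained from the one for $\Delta=1$ by rescaling, and that the cost scales as $1/\Delta$; this is what produces the factor $t_0/\Delta$.

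For the construction I would enforce positivity automatically by seeking $p$ in the form $p(\tau)=|\psi(\tau)|^2$ with $\norm{\psi}_2=1$ and $\hat\psi$ supported in $[-\Delta/2,\Delta/2]$: then $\hat p=\hat\psi\ast\overline{\hat\psi(-\,\cdot\,)}$ is supported in $[-\Delta,\Delta]$ as required, $p\ge0$, and $\int p = 1$. With $\Delta=1$, minimising the cost becomes the spectral-concentration problem of minimising $\int_\R|\tau|\,|\psi(\tau)|^2\,\diff{\tau}$ over unit-norm $\psi$ band-limited to $[-1/2,1/2]$, i.e.\ finding the smallest eigenvalue of the band-limited compression of multiplication by $|\tau|$. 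Solving this eigenvalue equation numerically yields the optimiser and the constant $t_0 = 2.32132$. Finally, sampling $\tau$ from $p$ and applying $e^{-i\tau H}$ (running the evolution backwards when $\tau<0$, which is permitted by hypothesis) realises the stochastic variable at expected cost $\langle|\tau|\rangle = t_0/\Delta$. I expect the main obstacle to be this last step: establishing that a finite-cost density exists at all — the naive band-limited choices, such as the triangular (Fejér) characteristic function, give a density decaying only like $\tau^{-2}$, so that $|\tau|\,p$ is non-integrable — and then pinning down the extremal value $t_0$ rather than merely an upper bound requires a careful analysis of the concentration eigenvalue problem.
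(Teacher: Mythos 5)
The paper never proves this proposition: it imports the identity from \cite{eigenpathTraversalPR} and the constant $t_0 = 2.32132$ from theorem 2 of \cite{QLSPwithPRnumerics}, so there is no internal proof to compare against. Judged on its own merits, your reconstruction follows essentially the argument of those references and its structure is sound: the block decomposition $e^{-i\tau H}=e^{-i\tau\omega_0}P+e^{-i\tau H_Q}Q$ is valid because $P$ is a spectral projector; the diagonal blocks behave correctly for every fixed $\tau$; and the averaged off-diagonal block is $P\rho Q\,\chi(H_Q-\omega_0)$ with $\chi$ the characteristic function of $\tau$, which vanishes for every $\rho$ once $\chi$ vanishes on $\{\mu:|\mu|\ge\Delta\}$, a set containing the spectrum of $H_Q-\omega_0$ restricted to the range of $Q$. (Strictly, vanishing on that whole set is sufficient rather than equivalent for a fixed $H$, whose spectrum need not fill it, but it is the right design target since only the bound $\Delta$ on the gap is known.) The scaling reduction to $\Delta=1$, the ansatz $p=|\psi|^2$ with $\hat\psi$ band-limited to $[-\Delta/2,\Delta/2]$, and your observation that the Fej\'er window gives infinite expected cost while smooth windows give finite cost are all correct and mirror the construction in \cite{QLSPwithPRnumerics}. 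Two caveats. First, the proposition only requires \emph{achieving} cost $t_0/\Delta$, which your construction delivers; your stronger implicit claim that $t_0$ is the optimum over all admissible densities would additionally need a Krein/Fej\'er--Riesz-type factorisation showing that every nonnegative integrable $p$ with $\hat p$ supported in $[-\Delta,\Delta]$ can be written as $|\psi|^2$ with $\hat\psi$ supported in the half-band; without this, the ansatz yields only an upper bound. Second, the value $2.32132$ is asserted rather than derived --- as in the cited reference, it rests on numerically solving the concentration eigenvalue problem, so your argument (like theirs) pins down the constant only to the accuracy of that computation.
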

The angled brackets mean taking the average over $\tau$. The result is originally from \cite{eigenpathTraversalPR}. The value for $t_0$ was obtained in theorem 2 of \cite{QLSPwithPRnumerics}.

The algorithm is now simple to state:
\begin{algorithm}
Pick a Poisson process $N: [0,1] \times (\Omega, \mathcal{A}, P)\to \mathbb{N}$ with rate $\lambda(s)$\;
At each jump point $s$ of the Poisson process, pick an instance $t$ of the random variable $T$ as defined in proposition \ref{phaseRandomisation} and evolve the system under the Hamiltonian evolution $e^{-itH(s)}$\;
\caption{Poisson-distributed phase randomisation} \label{procedure}
\end{algorithm}

The density matrix describing the system is a random variable that satisfies the stochastic differential equation $\diff{\rho} = \big(e^{-i\tau(s)H(s)}\rho e^{i\tau(s)H(s)}- \rho\big)\diff{N}.$
Averaging over realisations, we get
\begin{equation} \diff{\langle\rho\rangle} = \big(P\langle\rho\rangle P + Q\langle e^{-i\tau H}\rho e^{i\tau H} \rangle Q - \langle\rho\rangle\big)\lambda\diff{s}. \label{eq:diffEq} \end{equation}
Note that this should properly be thought of as a ``marginalised'' density matrix, rather than an ``average'' density matrix. This is entirely analoguous to the situation for classical probability distributions, where integrating out a variable gives the marginal distribution.
In this case we are marginalising over the choice of Poisson process $N$. In the rest of the paper, we will use $\rho$ to refer to the marginalised distribution $\langle\rho\rangle$. This corresponds to the density matrix you would observe if you were not told which Poisson process $N$ was chosen.

The total time taken by one run of the algorithm is a random variable $T$ satisfying
\begin{equation} \diff{T} = \tau\diff{N}. \end{equation}
In order to find the time complexity, we take the average. This gives $\diff{T} = \Delta^{-1}\lambda\diff{s}$, so $T = \int_0^1\frac{\lambda}{\Delta}\diff{s}$.

\subsection{Analysis}
\begin{lemma} \label{lemma:errorBound}
Under the assumptions in \ref{assumptions}, the algorithm \ref{procedure} with rate $\lambda(s)$ produces a state with an infidelity that is bounded by
\begin{equation}
\epsilon \leq \norm{\lambda(0)^{-1}P'(0)} + \norm{\lambda(1)^{-1}P'(1)} + \int_0^1\bigg(\norm{\frac{P^{\prime\prime}}{\lambda}} + \Big|\Big(\frac{1}{\lambda}\Big)'\Big|\norm{P^{\prime}}\bigg)\diff{s}.
\end{equation}
\end{lemma}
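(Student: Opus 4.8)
The plan is to track the subspace fidelity $f(s) = \Tr\big(P(s)\rho(s)\big)$ of the marginalised state and to bound the infidelity $\epsilon = 1 - f(1)$. Since the algorithm starts in the eigenspace of $H(0)$ we have $f(0) = 1$, so $\epsilon = -\int_0^1 f'(s)\,\diff{s}$ and the whole task reduces to controlling $f'$.

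First I would differentiate, $f' = \Tr(P'\rho) + \Tr(P\rho')$, and show the second term vanishes. Writing the averaged equation \eqref{eq:diffEq} as $\rho' = \lambda\big(\mathcal{D}(\rho) - \rho\big)$ with $\mathcal{D}(\rho) = P\rho P + Q\langle e^{-i\tau H}\rho e^{i\tau H}\rangle Q$, and using $PQ = 0$, gives $\Tr\big(P\mathcal{D}(\rho)\big) = \Tr(P\rho P) = \Tr(P\rho)$, so $\Tr(P\rho') = 0$: dephasing never changes the fidelity directly and $f' = \Tr(P'\rho)$. The key is then an exact identity for this quantity. Differentiating $P^2 = P$ yields $PP'P = 0$ and $QP'Q = 0$, so $P' = PP'Q + QP'P$ is purely off-diagonal; and reading the off-diagonal blocks off $\rho' = \lambda(\mathcal{D}(\rho)-\rho)$, together with $P\mathcal{D}(\rho)Q = 0$, gives $P\rho'Q = -\lambda\, P\rho Q$ and $Q\rho'P = -\lambda\, Q\rho P$. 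Since $P'$ is off-diagonal, both $\Tr(P'\rho)$ and $\Tr(P'\rho')$ see only the off-diagonal blocks of $\rho$ and $\rho'$, which differ precisely by the factor $-\lambda$, so
\[
f'(s) = \Tr(P'\rho) = -\frac{1}{\lambda}\Tr(P'\rho').
\]

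It then remains to integrate by parts. From $\epsilon = \int_0^1 \Tr\big(\tfrac{P'}{\lambda}\rho'\big)\,\diff{s}$, taking $\rho'$ as the factor to integrate gives
\[
\epsilon = \left[\Tr\Big(\tfrac{P'}{\lambda}\rho\Big)\right]_0^1 - \int_0^1 \Tr\Big(\big(\tfrac{P'}{\lambda}\big)'\rho\Big)\,\diff{s}.
\]
Expanding $\big(\tfrac{P'}{\lambda}\big)' = \tfrac{P''}{\lambda} + \big(\tfrac{1}{\lambda}\big)'P'$, applying the triangle inequality, and using $|\Tr(A\rho)| \le \norm{A}\norm{\rho}_1 = \norm{A}$ (the operator/trace-norm duality, valid because $\mathcal{D}$ is trace preserving so $\rho$ stays a density matrix) reproduces exactly the stated bound.

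The main obstacle is spotting the exact identity for $f'$ rather than grinding through the dynamics of the coherences. The naive route solves the relaxation equation $\rho_{PQ}' = -\lambda\rho_{PQ} + (\text{driving})$ and slaves $\rho_{PQ}\approx(\text{driving})/\lambda$ in the large-$\lambda$ limit; this introduces terms quadratic in $P'$ that do not appear in the bound. Substituting the coherence by $-\lambda^{-1}\rho'$ instead of by the driving is what collapses everything into a single integration by parts. The only remaining care is regularity: one must check that $P$ is twice differentiable (which follows from $H \in C^2$ together with the gap assumption, via the Riesz projector $P = \tfrac{1}{2\pi i}\oint (z - H)^{-1}\diff{z}$) so that $P''$ and the integration by parts are legitimate.
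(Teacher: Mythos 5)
Your proposal is correct and takes essentially the same approach as the paper: you track $\Tr(P\rho)$, show $\Tr(P\rho')=0$, use $PP'P=QP'Q=0$ together with the block-diagonal structure of the dephasing term to get the exact identity $\Tr(P'\rho)=-\lambda^{-1}\Tr(P'\rho')$, then integrate by parts and bound traces by operator norms, exactly as in the paper. Your block-wise reading of the coherences ($P\rho'Q=-\lambda P\rho Q$) is just a repackaging of the paper's direct substitution $\rho = P\rho P + Q\langle e^{-i\tau H}\rho e^{i\tau H}\rangle Q - \lambda^{-1}\rho'$ into $\Tr(P'\rho)$, so there is no substantive difference.
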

\begin{proof}
The infidelity is given by $\epsilon = 1 - \Tr\big(P(1)\rho(1)\big) = \Tr\big(P(0)\rho(0)\big) - \Tr\big(P(1)\rho(1)\big) = \Big|\Tr(P\rho)\big|_0^1\Big|$, so it makes sense to track how the fidelity $\Tr\big(P(s)\rho(s)\big)$ changes in time.

We construct a differential equation for $\Tr(P\rho)$ by taking the derivative with respect to $s$, $\Tr(P\rho)' = \Tr(P'\rho) + \Tr(P\rho')$. This can be simplified using the fact that $PP'P = 0$ and $QP'Q = 0$.\footnote{We have $P' = (PP)' = P'P + PP'$, so $PP'P = 2PP'P$ and $QP'Q = 0$.} Indeed, we have
\begin{equation}
\Tr(P\rho') = \lambda \Tr\big(P(P\rho P + Q\langle e^{-i\tau H}\rho e^{i\tau H} \rangle Q  - \rho)\big) = \Tr(P\rho P) - \Tr(P\rho) = 0 \label{eq:firstPartPrho}
\end{equation}
and
\begin{align}
\Tr(P'\rho) &= \Tr\Big(P'(P\rho P + Q\langle e^{-i\tau H}\rho e^{i\tau H} \rangle Q  - \lambda^{-1}\rho')\Big) \\
&= \Tr(PP'P\rho) + \Tr\big((QP'Q)\langle e^{-i\tau H}\rho e^{i\tau H} \rangle\big) - \Tr(\lambda^{-1}P'\rho') \\
&= - \Tr(\lambda^{-1}P'\rho'), \label{eq:secondPartPrho}
\end{align}
so $\Tr(P\rho)' = - \Tr(\lambda^{-1}P'\rho')$. Integrating gives
\begin{align}
\Tr(P\rho)\big|_0^1 &= - \int_0^1\Tr(\lambda^{-1}P'\rho')\diff{s} \\
&= -\lambda^{-1}\Tr(P'\rho)\big|_0^1 + \int_0^1\bigg(\Tr\Big(\frac{P^{\prime\prime}}{\lambda}\rho\Big) + \Tr\Big(\big(\tfrac{1}{\lambda}\big)'P^{\prime}\rho\Big)\bigg)\diff{s},
\end{align}
which we can bound by
\begin{align}
\epsilon = \Big|\Tr(P\rho)\big|_0^1\Big| &\leq \Big|\Tr(\lambda^{-1}P'\rho)\big|_0^1\Big| + \int_0^1\bigg(\Tr\Big(\big|\frac{P^{\prime\prime}}{\lambda}\rho\big|\Big) + \Big|\Big(\frac{1}{\lambda}\Big)'\Big|\Tr\Big(\big|P^{\prime}\rho\big|\Big)\bigg)\diff{s} \\
&\leq \Big(\norm{\lambda(0)^{-1}P'(0)} + \norm{\lambda(1)^{-1}P'(1)}\Big)\Tr(\rho) + \int_0^1\bigg(\norm{\frac{P^{\prime\prime}}{\lambda}} + \Big|\Big(\frac{1}{\lambda}\Big)'\Big|\norm{P^{\prime}}\bigg)\Tr(\rho)\diff{s} \\
&\leq \norm{\lambda(0)^{-1}P'(0)} + \norm{\lambda(1)^{-1}P'(1)} + \int_0^1\bigg(\norm{\frac{P^{\prime\prime}}{\lambda}} + \Big|\Big(\frac{1}{\lambda}\Big)'\Big|\norm{P^{\prime}}\bigg)\diff{s}.
\end{align}
\end{proof}

The next step is to bound $\lVert P'\rVert$ and $\lVert P^{\prime\prime}\rVert$ by more useful quantities. We make use of the following lemma:

\begin{lemma} \label{lemma:projectorDerivativeBounds}
Under the assumptions stated in \ref{assumptions}, we have
\begin{enumerate}
\item $\lVert P'\rVert \leq 2 \frac{\lVert H'\rVert}{\Delta}$;
\item $\lVert P^{\prime\prime}\rVert \leq 8 \frac{\lVert H'\rVert^2}{\Delta^2} + 2 \frac{\lVert H^{\prime\prime}\rVert}{\Delta}$.
\end{enumerate}
\end{lemma}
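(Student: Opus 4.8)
The plan is to represent the spectral projector by a Riesz contour integral of the resolvent and to differentiate under the integral sign. Fix an arbitrary $s_0 \in [0,1]$ and, writing $R_s(z) \defeq (z\id - H(s))^{-1}$, let $\Gamma$ be the circle in $\C$ of radius $\Delta(s_0)/2$ centred at $\omega_0(s_0)$. Since the only spectrum of $H(s_0)$ in $[\omega_0(s_0)-\Delta(s_0),\,\omega_0(s_0)+\Delta(s_0)]$ is $\{\omega_0(s_0)\}$, the circle $\Gamma$ encloses $\omega_0(s_0)$ and nothing else, so
\[
P(s_0) = \frac{1}{2\pi i}\oint_\Gamma R_{s_0}(z)\,\diff{z}.
\]
By continuity of $\omega_0$ and of the spectrum of the family $H(s)$, the same fixed contour $\Gamma$ continues to separate $\omega_0(s)$ from the rest of the spectrum of $H(s)$ for all $s$ in a neighbourhood of $s_0$, so that $P(s) = \frac{1}{2\pi i}\oint_\Gamma R_s(z)\,\diff{z}$ there with $\Gamma$ held fixed.

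The crucial quantitative input is a resolvent bound along $\Gamma$. For $z\in\Gamma$ we have $|z-\omega_0(s_0)| = \Delta(s_0)/2$, while every other eigenvalue $\mu$ of $H(s_0)$ satisfies $|\mu-\omega_0(s_0)| > \Delta(s_0)$ and hence $|z-\mu| > \Delta(s_0)/2$. As $H(s_0)$ is self-adjoint, $\norm{R_{s_0}(z)} = 1/\mathrm{dist}(z,\mathrm{spec}\,H(s_0)) \leq 2/\Delta(s_0)$ uniformly on $\Gamma$. I then differentiate the resolvent: from $(z\id - H)R = \id$ one obtains $\tfrac{d}{ds}R_s = R_s H' R_s$, and a second application of the Leibniz rule gives $\tfrac{d^2}{ds^2}R_s = 2 R_s H' R_s H' R_s + R_s H'' R_s$.

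Differentiating under the integral (legitimate because $\Gamma$ is fixed near $s_0$ and the integrand is smooth in $s$) yields, at $s = s_0$,
\[
P' = \frac{1}{2\pi i}\oint_\Gamma R H' R \,\diff{z}, \qquad P'' = \frac{1}{2\pi i}\oint_\Gamma \big(2 R H' R H' R + R H'' R\big)\,\diff{z},
\]
all resolvents understood at $s_0$. Estimating each integral by $\tfrac{1}{2\pi}\cdot\mathrm{length}(\Gamma)\cdot(\text{sup of integrand norm})$, with $\mathrm{length}(\Gamma)=\pi\Delta(s_0)$ and $\norm{R}\leq 2/\Delta(s_0)$, the first bound becomes $\norm{P'}\leq \tfrac{1}{2\pi}\cdot\pi\Delta\cdot(2/\Delta)^2\norm{H'} = 2\norm{H'}/\Delta$, and the second becomes $\norm{P''}\leq \tfrac{1}{2\pi}\cdot\pi\Delta\cdot\big(2(2/\Delta)^3\norm{H'}^2 + (2/\Delta)^2\norm{H''}\big) = 8\norm{H'}^2/\Delta^2 + 2\norm{H''}/\Delta$. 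Since $s_0$ was arbitrary, both inequalities hold pointwise on $[0,1]$.

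I expect the only delicate step to be the justification that a single fixed contour represents $P(s)$ throughout a neighbourhood of $s_0$, which is precisely what licenses differentiating under the integral sign; this relies on the continuity of $\omega_0$ together with the continuity of the spectrum of the continuous family $H(s)$, guaranteeing that no spectral point crosses $\Gamma$ for $s$ near $s_0$. The remaining content — the resolvent-derivative identity, the Leibniz expansion, and the term-by-term norm estimates — is routine, and the numerical constants $2$, $8$, $2$ fall out exactly from the choice of radius $\Delta/2$.
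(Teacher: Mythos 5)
Your proposal is correct and follows essentially the same route as the paper's own proof in the appendix: the Riesz contour representation $P = \frac{1}{2\pi i}\oint_\Gamma R_H(z)\,\diff{z}$ over a circle of radius $\Delta/2$, the resolvent-derivative identities $R' = RH'R$ and $R'' = 2RH'RH'R + RH''R$, and the estimate $\norm{R}\leq 2/\Delta$ on $\Gamma$ combined with the contour length $\pi\Delta$, yielding the same constants. Your explicit justification that a single fixed contour represents $P(s)$ in a neighbourhood of $s_0$ (licensing differentiation under the integral) is a point the paper leaves implicit, so your write-up is if anything slightly more careful.
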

This is fairly standard. See for example \cite{Reichardt}. A proof is provided in appendix \ref{appendix:projectorBounds}. We are now ready to use lemma \ref{lemma:errorBound} in two distinct contexts, leading to theorems \ref{theorem:constantRate} and \ref{theorem:adaptiveRate}.

\subsubsection{Constant $\lambda$}
We first derive a theorem under the assumption that $\lambda$ is constant.
In this case we obtain the following result:
\begin{theorem} \label{theorem:constantRate}
Under the assumptions in \ref{assumptions}, the algorithm \ref{procedure} produces the target state with fidelity of at least $1-\epsilon$ if $\lambda$ is constant and
\begin{equation}
\epsilon^{-1} 2\Big(\frac{\lVert H'(0)\rVert}{\Delta(0)} + \frac{\lVert H'(1)\rVert}{\Delta(1)} + \int_0^1 4 \frac{\lVert H'\rVert^2}{\Delta^2} + \frac{\lVert H^{\prime\prime}\rVert}{\Delta}\diff{s}\Big) \leq \lambda.
\end{equation}
In this case the time complexity of the procedure is given by $T = \lambda \int_0^1 \frac{1}{\Delta}\diff{s}$.
\end{theorem}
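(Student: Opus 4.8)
The plan is to combine Lemma~\ref{lemma:errorBound} with Lemma~\ref{lemma:projectorDerivativeBounds}, exploiting the fact that $\lambda$ is constant to simplify the general infidelity bound. First I would substitute $\lambda(s) \equiv \lambda$ into the bound of Lemma~\ref{lemma:errorBound}. Since $\lambda$ is constant, the derivative $(1/\lambda)'$ vanishes, so the term $|(1/\lambda)'|\,\norm{P'}$ drops out of the integrand entirely, and the boundary factors $\lambda(0)^{-1}$ and $\lambda(1)^{-1}$ both collapse to $\lambda^{-1}$. This leaves
\begin{equation}
\epsilon \leq \lambda^{-1}\Big(\norm{P'(0)} + \norm{P'(1)} + \int_0^1 \norm{P^{\prime\prime}}\diff{s}\Big).
\end{equation}

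Next I would invoke the two estimates of Lemma~\ref{lemma:projectorDerivativeBounds}, replacing $\norm{P'}$ by $2\norm{H'}/\Delta$ at the two endpoints and $\norm{P^{\prime\prime}}$ by $8\norm{H'}^2/\Delta^2 + 2\norm{H^{\prime\prime}}/\Delta$ inside the integral. Pulling out the common factor of $2$ then produces exactly the combination appearing in the theorem, so that the infidelity is at most $\epsilon$ whenever
\begin{equation}
\frac{2}{\lambda}\Big(\frac{\norm{H'(0)}}{\Delta(0)} + \frac{\norm{H'(1)}}{\Delta(1)} + \int_0^1 \Big(4\frac{\norm{H'}^2}{\Delta^2} + \frac{\norm{H^{\prime\prime}}}{\Delta}\Big)\diff{s}\Big) \leq \epsilon.
\end{equation}
Rearranging this inequality to isolate $\lambda$ on the right reproduces the stated condition; and since an infidelity of at most $\epsilon$ is equivalent to a fidelity of at least $1-\epsilon$, the fidelity claim follows.

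For the time complexity I would simply recall the formula $T = \int_0^1 \lambda/\Delta\,\diff{s}$ obtained above by averaging $\diff{T} = \tau\,\diff{N}$, and pull the constant $\lambda$ out of the integral to get $T = \lambda\int_0^1 \Delta^{-1}\diff{s}$, as claimed.

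I do not expect any genuine obstacle here: the statement is a direct specialisation of the general bound of Lemma~\ref{lemma:errorBound} to the constant-rate regime, where the key simplification is the vanishing of the $(1/\lambda)'$ contribution. The only point requiring mild care is bookkeeping of the numerical factors of $2$ coming out of Lemma~\ref{lemma:projectorDerivativeBounds}, so that they align precisely with the factor written outside the parentheses in the theorem.
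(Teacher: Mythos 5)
Your proposal is correct and follows essentially the same route as the paper's own proof: specialise Lemma~\ref{lemma:errorBound} to constant $\lambda$ (killing the $(1/\lambda)'$ term), apply Lemma~\ref{lemma:projectorDerivativeBounds} at the endpoints and under the integral, and rearrange the resulting bound $\epsilon_0 \leq 2\lambda^{-1}\big(\tfrac{\norm{H'(0)}}{\Delta(0)} + \tfrac{\norm{H'(1)}}{\Delta(1)} + \int_0^1 4\tfrac{\norm{H'}^2}{\Delta^2} + \tfrac{\norm{H''}}{\Delta}\diff{s}\big)$ against the hypothesis on $\lambda$, with the time complexity read off from $T = \int_0^1 \lambda/\Delta \diff{s}$. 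The factor bookkeeping also matches the paper's constant $B$ exactly.
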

\begin{proof}
Let $\epsilon_0$ be the actual error of the algorithm. We need $\epsilon_0\leq \epsilon$. We can use lemma \ref{lemma:projectorDerivativeBounds} to rewrite the inequality in lemma \ref{lemma:errorBound}  as
\begin{align}
\epsilon_0 &\leq \lambda^{-1}\big(\norm{P'(0)} + \norm{P'(1)}\big) + \lambda^{-1}\int_0^1\norm{P^{\prime\prime}}\diff{s} \\
&\leq \lambda^{-1}\Big(2\frac{\lVert H'(0)\rVert}{\Delta(0)} + 2\frac{\lVert H'(1)\rVert}{\Delta(1)} + \int_0^1 8 \frac{\lVert H'\rVert^2}{\Delta^2} + 2 \frac{\lVert H^{\prime\prime}\rVert}{\Delta}\diff{s}\Big).
\end{align}
Set $B \defeq 2\Big(\frac{\lVert H'(0)\rVert}{\Delta(0)} + \frac{\lVert H'(1)\rVert}{\Delta(1)} + \int_0^1 4 \frac{\lVert H'\rVert^2}{\Delta^2} + \frac{\lVert H^{\prime\prime}\rVert}{\Delta}\diff{s}\Big)$.
Then we have
\begin{equation}
\epsilon_0 \leq \lambda^{-1}B \leq \epsilon B^{-1}B = \epsilon,
\end{equation}
so the algorithm works. The time complexity is simply given by $T = \int_0^1\frac{\lambda}{\Delta}\diff{s} = \lambda\int_0^1\frac{1}{\Delta}\diff{s}$.
\end{proof}
This result can be compared to theorem \ref{theorem:constantRateWithError} in the circuit model.

\subsubsection{Scaling $\lambda$ with the gap}
We know from \cite{RolandLocalAQC} and \cite{otherLocalAQC} that the performance of AQC can be improved with an adapted schedule, taking more time when the gap is small. Similarly we expect it to be possible to improve the performance of our procedure by varying $\lambda$. Indeed this is the case.

\begin{theorem} \label{theorem:adaptiveRate}
Under the assumptions in \ref{assumptions}, we additionally assume that there exists $0\leq q\leq 1$ and $B_1,B_2$ such that $\int_0^1 \frac{1}{\Delta^{1+q}}\diff{s} \leq B_1\Delta_m^{-q}$ and $\int_0^1 \frac{1}{\Delta^{2-q}}\diff{s} \leq B_2\Delta_m^{q-1}$ for all instances of the problem. Then algorithm \ref{procedure} produces the target state with a fidelity of at least $1-\epsilon$ if
\begin{equation}
\lambda = \epsilon^{-1}\frac{C}{\Delta^q\Delta_m^{1-q}},
\end{equation}
where $C\defeq 2\sup_{s\in[0,1]}\Big(2\norm{H'(s)} + 4\norm{H'(s)}^2B_2 + \norm{H''(s)} + q|\Delta'(s)|\,\norm{H'(s)}B_2 \Big)$.

In this case the time complexity of the procedure is given by
\[ T \leq \epsilon^{-1}\frac{B_1C}{\Delta_m}. \]
\end{theorem}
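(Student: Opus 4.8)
The plan is to feed the prescribed rate $\lambda = \epsilon^{-1}C\,\Delta^{-q}\Delta_m^{q-1}$ directly into the infidelity bound of Lemma \ref{lemma:errorBound}, replace every occurrence of $\norm{P'}$ and $\norm{P^{\prime\prime}}$ by the gap estimates of Lemma \ref{lemma:projectorDerivativeBounds}, and then route each resulting term to one of the two integral hypotheses on $\Delta$. The time-complexity claim is independent and will follow from a one-line computation using only the first hypothesis.

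First I would record the two elementary identities $1/\lambda = \epsilon C^{-1}\Delta^{q}\Delta_m^{1-q}$ and $(1/\lambda)' = \epsilon C^{-1} q\,\Delta^{q-1}\Delta'\,\Delta_m^{1-q}$, so that every term of Lemma \ref{lemma:errorBound} turns into an explicit power of $\Delta$ times a derivative norm. Inserting $\norm{P'}\le 2\norm{H'}/\Delta$ and $\norm{P^{\prime\prime}}\le 8\norm{H'}^2/\Delta^2 + 2\norm{H^{\prime\prime}}/\Delta$, the terms sort into two families by the power of $\Delta$ they carry. The contribution from $\norm{H'}^2/\Delta^2$ inside $\norm{P^{\prime\prime}/\lambda}$ and the contribution from $|\Delta'|\,\norm{H'}/\Delta$ inside $|(1/\lambda)'|\norm{P'}$ each produce an integrand proportional to $\Delta^{q-2}=\Delta^{-(2-q)}$, which is exactly controlled by the hypothesis $\int_0^1\Delta^{-(2-q)}\diff{s}\le B_2\Delta_m^{q-1}$; the prefactor $\Delta_m^{1-q}$ coming from $1/\lambda$ then cancels $\Delta_m^{q-1}$, leaving a pure constant. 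The remaining contributions, namely the $\norm{H^{\prime\prime}}/\Delta$ term and the two boundary terms $\norm{\lambda^{-1}P'}$ at $s=0,1$, all carry a factor $\Delta^{q-1}$; here I would use only the pointwise bound $\Delta^{q-1}\Delta_m^{1-q}\le 1$, valid since $q\le 1$ and $\Delta\ge\Delta_m$. Thus no integral hypothesis, and in particular no $B_2$, is spent on these, which matches the fact that the $\norm{H^{\prime\prime}}$ and the bare $\norm{H'}$ terms appear in $C$ without a $B_2$.

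Collecting the estimates bounds the infidelity by $\epsilon C^{-1}$ times a sum of suprema of $\norm{H'}$, $B_2\norm{H'}^2$, $\norm{H^{\prime\prime}}$, and $qB_2|\Delta'|\,\norm{H'}$; since each summand is nonnegative, this is dominated by the single supremum defining $C$, giving infidelity at most $\epsilon$. For the running time I would compute $T = \int_0^1 \lambda/\Delta\,\diff{s} = \epsilon^{-1}C\,\Delta_m^{q-1}\int_0^1\Delta^{-(1+q)}\diff{s}$ and apply the hypothesis $\int_0^1\Delta^{-(1+q)}\diff{s}\le B_1\Delta_m^{-q}$, using $\Delta_m^{q-1}\Delta_m^{-q}=\Delta_m^{-1}$ to obtain $T\le \epsilon^{-1}B_1 C/\Delta_m$.

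I expect the only genuine obstacle to lie in the constant bookkeeping of the fidelity estimate: one must keep the numerical factors from Lemma \ref{lemma:projectorDerivativeBounds} aligned with the mixed powers of $\Delta$ so that every piece lands against the correct bound (the $\Delta^{-(2-q)}$ pieces against $B_2$, the $\Delta^{q-1}$ pieces against the trivial pointwise estimate) and so that the accumulated constants assemble into precisely the supremum defining $C$. Because $C$ is a supremum of a sum rather than a sum of suprema, the cheapest term-by-term bound can cost an overall constant factor; this is immaterial for the asymptotic complexity $T = O\!\big(B_1 C/(\epsilon\Delta_m)\big)$, which is the substance of the theorem.
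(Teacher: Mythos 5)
Your proposal is correct and follows essentially the same route as the paper's own proof: substitute the prescribed $\lambda$ into Lemma \ref{lemma:errorBound}, bound $\norm{P'}$ and $\norm{P''}$ via Lemma \ref{lemma:projectorDerivativeBounds}, send the $\Delta^{q-2}$ integrands to the $B_2$ hypothesis and the $\Delta^{q-1}$ pieces (boundary terms and the $\norm{H''}$ term) to the pointwise bound $\Delta \ge \Delta_m$, and spend the $B_1$ hypothesis only on the running-time computation. The sum-of-suprema versus supremum-of-the-sum looseness you flag at the end is present in the paper's proof as well, which likewise passes from four separately derived suprema to $\sup_s$ of their sum, so your write-up matches the paper's argument in that respect too.
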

\begin{corollary} \label{corollary:adaptiveRate}
If $\int_0^1 \frac{1}{\Delta^p} \diff{s} = O(\Delta_m^{1-p})$ holds for all $p>1$, $|\Delta'| = O(1)$, $\norm{H'} = O(1)$ and $\norm{H''} = O(1)$, then algorithm \ref{procedure} with the rate defined in theorem \ref{theorem:adaptiveRate} produces the target state with fidelity $1-\epsilon$ and a time complexity of $O(\Delta_m^{-1})$ for all $0< q <1$.
\end{corollary}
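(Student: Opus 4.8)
The plan is to obtain the corollary as a direct instantiation of Theorem \ref{theorem:adaptiveRate}: for each fixed $q$ with $0<q<1$, I would verify that the two integral hypotheses of that theorem follow from the single asymptotic assumption $\int_0^1 \Delta^{-p}\diff{s} = O(\Delta_m^{1-p})$, extract the constants $B_1,B_2$, and then check that the constant $C$ produced by the theorem is $O(1)$, so that the stated time bound collapses to $O(\Delta_m^{-1})$.

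First I would instantiate the asymptotic assumption at the two exponents appearing in the theorem. Taking $p = 1+q$ gives $\int_0^1 \Delta^{-(1+q)}\diff{s} = O(\Delta_m^{-q})$, i.e.\ there is a constant $B_1$ with $\int_0^1 \Delta^{-(1+q)}\diff{s} \leq B_1\Delta_m^{-q}$; taking $p = 2-q$ gives $\int_0^1 \Delta^{-(2-q)}\diff{s} = O(\Delta_m^{q-1})$, yielding a constant $B_2$ with $\int_0^1 \Delta^{-(2-q)}\diff{s}\leq B_2\Delta_m^{q-1}$. The crucial point is that both exponents satisfy $1+q>1$ and $2-q>1$ precisely when $0<q<1$, which is why the corollary excludes the endpoints: at $q=0$ or $q=1$ one of the two exponents equals $1$, lying outside the range $p>1$ covered by the hypothesis. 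With $B_1,B_2$ in hand, the two integral conditions of Theorem \ref{theorem:adaptiveRate} hold, so the theorem applies verbatim and already guarantees fidelity $1-\epsilon$ with the prescribed rate $\lambda = \epsilon^{-1}C/(\Delta^q\Delta_m^{1-q})$.

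It then remains to control the constant $C \defeq 2\sup_{s}\big(2\norm{H'} + 4\norm{H'}^2 B_2 + \norm{H''} + q|\Delta'|\,\norm{H'}B_2\big)$. Since $\norm{H'}=O(1)$, $\norm{H''}=O(1)$, $|\Delta'|=O(1)$, and $B_2$ is a constant independent of the scaling parameter, every term inside the supremum is $O(1)$, hence $C = O(1)$. Substituting into the theorem's time bound $T \leq \epsilon^{-1}B_1C/\Delta_m$ and treating $\epsilon$ as fixed, the prefactor $B_1C$ is $O(1)$, and we conclude $T = O(\Delta_m^{-1})$, as claimed.

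The main thing to be careful about is not a calculation but a matter of uniformity and quantifiers: the constants $B_1,B_2,C$ must be independent of whatever parameter drives $\Delta_m\to 0$ (typically the problem size) for the asymptotic statement to be meaningful. This is guaranteed because the boundedness hypotheses $\norm{H'},\norm{H''},|\Delta'| = O(1)$ are uniform in that parameter, and $B_1,B_2$ arise from the uniform asymptotic assumption. Beyond tracking this uniformity and correctly reading off why the open interval $0<q<1$ is forced by the exponents $1+q$ and $2-q$, the argument is an immediate specialisation of the preceding theorem.
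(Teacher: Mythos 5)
Your proposal is correct and matches the paper's (implicit) argument exactly: the paper states this corollary as an immediate specialisation of Theorem \ref{theorem:adaptiveRate}, obtained precisely by instantiating the asymptotic gap assumption at $p=1+q$ and $p=2-q$ (both $>1$ since $0<q<1$) to get $B_1,B_2$, and noting that the boundedness of $\norm{H'}$, $\norm{H''}$, $|\Delta'|$ makes $C=O(1)$, so $T\leq \epsilon^{-1}B_1C/\Delta_m = O(\Delta_m^{-1})$. Your additional remark on the uniformity of the constants is a sensible clarification but does not alter the route.
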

\begin{proof}[Proof of theorem \ref{theorem:adaptiveRate}]
Let $\epsilon_0$ be the actual error of the algorithm, we need $\epsilon_0\leq \epsilon$. 
In this case the inequality in lemma \ref{lemma:errorBound} becomes
\begin{equation}
\epsilon_0 \leq \epsilon C^{-1}\Delta_m^{1-q}\big(\Delta(0)^{q}\norm{P'(0)} + \Delta(1)^{q}\norm{P'(1)}\big) + \epsilon C^{-1}\int_0^1\Delta^{q}\Delta_m^{1-q}\norm{P^{\prime\prime}} + \Big|\Big(\Delta^{q}\Delta_m^{1-q}\Big)'\Big|\norm{P^{\prime}}\diff{s}. \label{eq:errorBoundAdaptiveLambda}
\end{equation}
We bound the terms separately, using lemma \ref{lemma:projectorDerivativeBounds}. For the first, we have
\begin{equation}
\Delta_m^{1-q}\Delta^{q}\norm{P'} \leq 2\Delta_m^{1-q}\Delta^{q}\frac{\norm{H'}}{\Delta} = 2\Delta_m^{1-q}\frac{\norm{H'}}{\Delta^{1-q}} \leq 2\Delta_m^{1-q}\frac{\norm{H'}}{\Delta_m^{1-q}} = 2\norm{H'} \leq 2\sup_{s\in [0,1]}\norm{H'}
\end{equation}
at both $s=0$ and $s=1$, so we bound the sum by $4\sup_{s\in [0,1]}\norm{H'}$.

The second term splits into two, since we bound $\norm{P''}$ by $8\frac{\norm{H'}^2}{\Delta^2} + 2 \frac{\norm{H''}}{\Delta}$. For the first part we have
\begin{align}
8\int_0^1\Delta^{q}\Delta_m^{1-q}\frac{\norm{H^{\prime}}^2}{\Delta^2}\diff{s} &\leq 8\sup_{s\in[0,1]}\norm{H'(s)}^2\Delta_m^{1-q}\int_0^1\frac{1}{\Delta^{2-q}}\diff{s} \\
&\leq 8\sup_{s\in[0,1]}\norm{H'(s)}^2B_2\Delta_m^{1-q}\Delta_m^{q-1} = 8\sup_{s\in[0,1]}\norm{H'(s)}^2B_2.
\end{align}
The second part gives
\begin{align}
2\int_0^1\Delta^{q}\Delta_m^{1-q}\frac{\norm{H^{\prime\prime}}}{\Delta}\diff{s} &\leq 2\sup_{s\in[0,1]}\norm{H''(s)}\Delta_m^{1-q}\int_0^1\frac{1}{\Delta^{1-q}}\diff{s} \\
&\leq 2\sup_{s\in[0,1]}\norm{H''(s)}\Delta_m^{1-q}\Delta_m^{q-1} = 2\sup_{s\in[0,1]}\norm{H''(s)}.
\end{align}
Finally, for the third term,
\begin{align}
\int_0^1\Big|\Big(\Delta^{q}\Delta_m^{1-q}\Big)'\Big|\norm{P^{\prime}}\diff{s} &= \int_0^1 q\Delta^{q-1}\Delta_m^{1-q}\big|\Delta'\big|\norm{P^{\prime}}\diff{s} \\
&\leq 2q\Delta_m^{1-q}\Big(\sup_{s\in [0,1]}|\Delta'(s)|\,\norm{H'(s)}\Big)\int_0^1\frac{\Delta^{q-1}}{\Delta}\diff{s} \\
&= 2q\Delta_m^{1-q}\Big(\sup_{s\in [0,1]}|\Delta'(s)|\,\norm{H'(s)}\Big)\int_0^1\frac{1}{\Delta^{2-q}}\diff{s} \\
&\leq 2q\Big(\sup_{s\in [0,1]}|\Delta'(s)|\,\norm{H'(s)}\Big)\Delta_m^{1-q}B_2\Delta_m^{q-1} \\
&= 2qB_2\Big(\sup_{s\in [0,1]}|\Delta'(s)|\,\norm{H'(s)}\Big).
\end{align}
Plugging everything back into equation \eqref{eq:errorBoundAdaptiveLambda}, gives
\begin{align}
\epsilon_0 &\leq \epsilon C^{-1}\sup_{s\in[0,1]}\Big(4\norm{H'(s)} + 8\norm{H'(s)}^2B_2 + 2\norm{H''(s)} + 2q|\Delta'(s)|\,\norm{H'(s)}B_2 \Big) \\
&= \epsilon C^{-1}C = \epsilon,
\end{align}
so the procedure works. We can then calculate the time complexity
\begin{equation}
T = \int_0^1 \frac{\lambda}{\Delta} \diff{s} = \epsilon^{-1}\int_0^1 \frac{C}{\Delta^q\Delta_m^{1-q}\Delta} \diff{s} = \epsilon^{-1}C\Delta_m^{q-1}\int_0^1 \frac{1}{\Delta^{q+1}} \diff{s} \leq \epsilon^{-1}C\Delta_m^{q-1}B_1\Delta_m^{-q} = \epsilon^{-1}CB_1\Delta_m^{-1}.
\end{equation}
\end{proof}
These results can be compared to theorem \ref{theorem:adaptiveRateWithError} and corollary \ref{corollary:adaptiveRateWithError} in the circuit model.

\section{Improving the scaling in the error with eigenstate filtering}
The use of eigenstate filtering was introduced in \cite{eigenstateFiltering} to improve scaling in the error tolerance for algorithms based on adiabatic principles and the quantum Zeno effect, in particular with application to QLSP.

A similar technique was used in \cite{QLSPdiscreteAdiabaticTheorem} to achieve optimal scaling, but using Linear Combinations of Unitaries (LCU) instead of Quantum Signal Processing (QSP). We adapt the technique of \cite{QLSPdiscreteAdiabaticTheorem} to the present situation.

\begin{theorem} \label{eigenstateFiltering}
Let $H$ be a Hamiltonian with $\norm{H} \leq 1$ and $0$ in the spectrum of $H$, $\sigma(H)$. Suppose
\begin{itemize}
\item $\Delta\geq 0$ is such that $[- \Delta, \Delta]\cap \sigma(H) = \{0\}$;
\item $P$ is the orthogonal projector on the eigenspace associated to the eigenvalue $0$, we set $Q \defeq \mathbb{1}-P$;
\item $\rho$ is a density matrix of the form $P\rho_0 P + Q\rho_1 Q$ with $\Tr(P\rho_0) > 1/2$, that we can prepare at cost $T_0$;
\item $\epsilon > 0$.
\end{itemize}
Further, suppose
\begin{itemize}
\item we can adjoin two ancilla qubits to $\rho$;
\item we can can measure and reprepare the ancilla qubits;
\item we can evolve the system under $H\otimes R$ and $\mathbb{1}\otimes R$ for time $t$ for all Hermitian operators $R$ on $\mathbb{C}^{2\times 2}$ with $\norm{R} \leq 1$ at a cost of $t$.
\end{itemize}
Then we can prepare a state $\rho_2$ such that $\Tr(P\rho_2) \geq 1-\epsilon$ at a cost of $T = O(T_0+ \Delta^{-1}\log(1/\epsilon))$.
\end{theorem}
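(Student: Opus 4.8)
The plan is to design an \emph{eigenstate filter}: a real even function $f$ with $f(0)=1$ and $|f(x)|\le\epsilon'$ whenever $|x|\in[\Delta,1]$, for a precision $\epsilon'$ fixed below, and then to apply $f(H)$ to $\rho$ by post-selection so as to suppress the unwanted $Q$-component. The input has the block-diagonal form $\rho = P\rho_0P + Q\rho_1Q$, and since $f(H)$ commutes with $P,Q$ and $f(H)P=f(0)P=P$, applying $f(H)$ on both sides gives (before renormalisation) $f(H)\rho f(H) = P\rho_0P + f(H)Q\rho_1Q f(H)$. Using $\norm{f(H)Q}\le\epsilon'$ and $\Tr(P\rho_0)>1/2$, the renormalised state $\rho_2$ satisfies $\Tr(Q\rho_2)\le (\epsilon')^2/\Tr(P\rho_0) < 2(\epsilon')^2$, so the choice $\epsilon'=\sqrt{\epsilon/2}$ yields $\Tr(P\rho_2)\ge 1-\epsilon$. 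This reduces the theorem to two tasks: building such an $f$ of low ``cost'', and implementing $f(H)$ inside the stated resource model.

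For the first task I would import the minimax/Chebyshev filter of \cite{eigenstateFiltering}, as adapted in \cite{QLSPdiscreteAdiabaticTheorem}, which meets the above specification while admitting a representation $f(x)=\int_{-T}^{T} g(t)\,e^{-itx}\diff{t}$, equivalently $f(x)=\int_0^T w(t)\cos(tx)\diff{t}$ since $f$ is even, with maximal time $T=O\!\big(\Delta^{-1}\log(1/\epsilon')\big)=O\!\big(\Delta^{-1}\log(1/\epsilon)\big)$ and bounded weight $\alpha\defeq\int|g(t)|\diff{t}=O(1)$. The point of the $\cos$ representation is that it matches the available operations: evolving under $H\otimes Z$ (with $Z$ on an ancilla qubit) for time $t$ and reading out the $\ket{+}$ component of that ancilla realises a block-encoding of $\cos(tH)$, the elementary ``select'' step; keeping $\alpha=O(1)$ is exactly what guarantees a constant success probability below.

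For the second task I would implement $f(H)/\alpha$ as a linear combination of these $\cos(tH)$ block-encodings, using one ancilla for the cos-trick (select) and the continuously tunable rotations $\mathbb{1}\otimes R$ together with the second ancilla to realise the weight $w(t)/\alpha$ (prepare). Applying the resulting block-encoding to $\rho$ and measuring the two ancillas in the ``success'' outcome leaves the post-selected state $\rho_2\propto f(H)\rho f(H)$. Because $f(0)=1$ and $\Tr(P\rho_0)>1/2$, the success probability per attempt is $\Tr(f(H)\rho f(H))/\alpha^2=\Omega(1)$, so an expected $O(1)$ attempts suffice; each attempt costs $T_0$ to (re)prepare $\rho$ plus $O(T)$ of evolution time, giving total expected cost $O\!\big(T_0+\Delta^{-1}\log(1/\epsilon)\big)$, as claimed. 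Alternatively, one stage of fixed-point amplitude amplification reflecting through a freshly prepared $\rho$ removes the repetition at the same asymptotic cost.

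The main obstacle is this second task: faithfully realising the continuous-time combination $\int w(t)\cos(tH)\diff{t}$ inside the given cost model with only two ancilla qubits. Unlike the discrete walk-operator setting of \cite{QLSPdiscreteAdiabaticTheorem}, where a $\log$-sized index register enumerates the Chebyshev terms, here the ``prepare'' step must encode a \emph{continuous} time parameter, and the key will be to exploit the tunable interaction $H\otimes R$ and rotations $\mathbb{1}\otimes R$ (rather than a large register) to do so while keeping the accumulated evolution time $O(T)$ and the subnormalisation $\alpha=O(1)$. Checking that the discretisation of the $t$-integral needed for a finite circuit contributes only a benign error, absorbable into $\epsilon'$ without inflating the cost, is the delicate point; the remaining error bookkeeping is routine given the bound sketched in the first paragraph.
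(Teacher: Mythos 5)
Your reduction in the first paragraph is correct, and is in fact slightly more careful than the paper's own bookkeeping: you make the renormalisation after post-selection explicit (choosing $\epsilon'=\sqrt{\epsilon/2}$ to absorb the factor $\Tr(P\rho_0)^{-1}<2$), and your argument for $O(1)$ expected repetitions via $\Tr(P\rho_0)>1/2$ is the same as the paper's. The genuine gap is exactly where you yourself locate ``the main obstacle'': you never implement the filter within the stated resource model, and that implementation is the actual content of the theorem. Two claims are left unproven: (i) that the Chebyshev-type filter admits a representation $f(x)=\int_0^T w(t)\cos(tx)\,\diff{t}$ with support $T=O(\Delta^{-1}\log(1/\epsilon))$ \emph{and} weight $\alpha=\int|g(t)|\,\diff{t}=O(1)$ --- both your success-probability bound and your cost bound hinge on this; and (ii) that a continuous-parameter LCU, whose ``prepare'' step must encode a continuum of times $t$ using only two ancilla qubits, can be realised at total evolution cost $O(T)$ with a discretisation error absorbable into $\epsilon'$. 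As written, the proof stops at the point where these would have to be established.

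The paper closes this gap by staying discrete, which dissolves the obstacle rather than confronting it. The filter is taken to be a Fourier series with \emph{integer} frequencies, $F=\sum_{k=-n}^{n}|a_k|^2e^{-ikH}$ with $\sum_k|a_k|^2=1$, so the subnormalisation (your $\alpha$) is exactly $1$; the optimal such window is the Dolph--Chebyshev window, achieving $A(0)=1$ and $\max_{\omega\notin[-\Delta,\Delta]}A(\omega)^2\leq\epsilon$ with $n\leq\frac{1}{2\Delta}\log(4/\epsilon)$. Lemma \ref{LCUlargeAncilla} then gives an \emph{exact} implementation: prepare an ancilla register in $\ket{f}=\sum_k a_k\ket{k}$, evolve under the select Hamiltonian $\sum_k kH\otimes\ketbra{k}{k}$, and measure $\ket{f}$; because the frequencies are integers, this select evolution factors into a product of $2n$ admissible controlled evolutions of the form $e^{\pm itH\otimes\Pi}$ with $\Pi$ a projector, at cost $O(n)=O(\Delta^{-1}\log(1/\epsilon))$ and with no discretisation error at all, and lemma \ref{LCUsmallAncilla} (the construction of \cite{QLSPdiscreteAdiabaticTheorem}) reduces the ancilla register to two qubits. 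If you replace your continuous integral by this discrete window, the rest of your argument goes through essentially verbatim; without that replacement, or without proofs of (i) and (ii), the central step of the theorem is missing.
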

The idea of the procedure is relatively simple. With these assumptions, we can apply controlled versions of the unitary $e^{-i t H}$, i.e.\ $e^{it H\otimes \Pi}$ for some projector $\Pi$ on $\mathbb{C}^{2\times 2}$. This means that we can apply linear combinations of $e^{it H\otimes \Pi}$ using the technique of linear combinations of unitaries, see lemmas \ref{LCUlargeAncilla} and \ref{LCUsmallAncilla}. In particular we can apply a polynomial that has a large peak at $0$ and is very small everywhere else. We use this to filter out the part of the state that we do not want.

\begin{lemma}[LCU with arbitrarily large ancilla register] \label{LCUlargeAncilla}
Let $f(x) = \sum_{k=-n}^na_kx^k$ be a rational polynomial with complex coefficients such that $\sum_{k=-n}^n |a_k|^2 = 1$. Let $H$ be a Hamiltonian and $\rho$ the state of the system. Assume we have access to an ancilla register with orthonormal basis $\{\ket{k} \;|\; k\in \mathbb{Z}\}$. Then, at a cost of $O(nt)$, we can do an operation which either
\begin{itemize}
\item succeeds and applies $\sum_{k=-n}^n|a_k|^2e^{-itkH}$ to the system,
\item or fails, with a probability of $1 - \Tr\bigg(\Big(\sum_{k=-n}^n|a_k|^2e^{-itkH}\Big)\rho\Big(\sum_{k=-n}^n|a_k|^2e^{itkH}\Big)\bigg)$. We can see when this has happened thanks to the measured contents of the ancilla register.
\end{itemize}
\end{lemma}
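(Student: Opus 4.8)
The plan is to realise the operation as a post-selected linear combination of unitaries, using the ancilla register as a coherent control for a conditional Hamiltonian evolution. Since $\sum_{k=-n}^n|a_k|^2 = 1$, the coefficients $a_k$ define a normalised ancilla state $\ket{\psi} = \sum_{k=-n}^n a_k\ket{k}$. First I would prepare the ancilla, initially in $\ket{0}$, in the state $\ket{\psi}$; this is a unitary on the ancilla alone and therefore contributes nothing to the Hamiltonian-evolution cost.

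Next I would apply the controlled evolution $V = \sum_{k=-n}^n \ketbra{k}{k} \otimes e^{-itkH}$, which is the single Hamiltonian evolution $V = e^{-it(\hat k\otimes H)}$ for the index operator $\hat k = \sum_k k\ketbra{k}{k}$. Restricted to the support of the register we have $\norm{\hat k} = n$, so this evolution costs $O(nt)$; equivalently one decomposes $V$ into controlled evolutions $e^{-it2^jH}$ along the bits of $k$, whose evolution times sum to $O(nt)$.

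The key step is the post-selection. I would measure the ancilla in any orthonormal basis containing $\ket{\psi}$ (equivalently, apply the inverse preparation and measure in the computational basis), declaring success on the outcome $\ket{\psi}$; the recorded outcome is exactly what lets us detect success or failure. The induced Kraus operator on the system is $K = \bra{\psi}V\ket{\psi}$, and here the heart of the argument is that orthonormality of the $\ket{k}$ collapses the double sum to a single one,
\[ K = \sum_{j,k} \bar a_j a_k\, \langle j|k\rangle\, e^{-itkH} = \sum_{k=-n}^n |a_k|^2\, e^{-itkH}, \]
which is precisely the claimed operation. The squared coefficients $|a_k|^2$, rather than $a_k$, appear exactly because the same state $\ket{\psi}$ sandwiches $V$ on both sides. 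In the density-matrix formalism the post-measurement state is $K\rho K^\dagger/\Tr(K\rho K^\dagger)$ and the success probability is $\Tr(K\rho K^\dagger) = \Tr\big((\sum_k|a_k|^2e^{-itkH})\rho(\sum_k|a_k|^2e^{itkH})\big)$, so the failure probability is the stated $1-\Tr(K\rho K^\dagger)$.

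The algebra above is routine; the main point requiring care is the cost accounting, since $\hat k\otimes H$ has operator norm growing like $n$. I expect the cleanest justification of the $O(nt)$ bound to come from the explicit bitwise decomposition of $V$ into controlled single-step evolutions, whose total evolution time telescopes to $O(nt)$, rather than from charging the single evolution $e^{-it(\hat k\otimes H)}$ directly. I would also confirm that the ancilla preparation and measurement act on the ancilla alone and hence add no Hamiltonian-evolution cost, which is what keeps the overall cost $O(nt)$.
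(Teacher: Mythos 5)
Your proposal is correct and has the same skeleton as the paper's proof: prepare the ancilla in $\ket{f}=\sum_k a_k\ket{k}$, apply the index-controlled evolution $V=e^{-it\sum_m mH\otimes\ketbra{m}{m}}$, and post-select on $\ket{f}$; your collapse of the double sum to the Kraus operator $K=\bra{f}V\ket{f}=\sum_k|a_k|^2e^{-itkH}$, with success probability $\Tr(K\rho K^\dagger)$, is exactly the paper's key identity. The genuine difference is how the $O(nt)$ cost is certified within the cost model, and here your route diverges: you expand the index operator in binary, giving $O(\log n)$ bit-controlled evolutions of durations $2^jt$ (note that handling negative indices via the shift $\hat{k}=\sum_j 2^j\Pi_j - n\mathbb{1}$ requires one extra uncontrolled evolution $e^{itnH}$ of duration $nt$, which you should make explicit), whereas the paper uses a unary ``staircase'' of $2n$ threshold-controlled evolutions, each of duration exactly $t$. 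Both total $O(nt)$, but the staircase is what carries over to lemma \ref{LCUsmallAncilla}, where only two ancilla qubits are available and the thresholds can be walked through sequentially; your binary form needs $\log n$ simultaneously addressable control bits, so it would not translate directly to that setting. On the other hand, your decomposition is the easier one to verify: the staircase projectors as printed in the paper, $\Pi_m^0=\mathbb{1}-\sum_{k=0}^m\ketbra{k}{k}$ and $\Pi_m^1=\mathbb{1}-\sum_{k=-m}^{0}\ketbra{k}{k}$, also act on the oppositely-signed sector, so their product does not reproduce $e^{-it\sum_m mH\otimes\ketbra{m}{m}}$ as claimed (already at $n=1$ the product collapses to the identity); the intended controls are the projectors onto $\{k>m\}$ and $\{k<-m\}$. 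One small accounting caveat: you declare ancilla preparation free, but under the theorem's cost model ancilla-only evolutions $\mathbb{1}\otimes R$ are charged by time, and the paper accordingly budgets a worst-case $O(n)$ for preparing $\ket{f}$; this is harmless since the lemma is applied with $t=\Theta(1)$, but the assumption should be stated relative to the model rather than taken for granted.
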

\begin{proof}
The procedure is as follows: we first prepare the ancilla in the state $\ket{f} \defeq \sum_{k=-n}^n a_k\ket{k}$, then apply $\sum_{k=-n}^nkH\otimes \ketbra{k}{k}$ for time $t$ and finally measure the state $\ket{f}$. If we measure any other state than $\ket{f}$, the procedure fails.

The result then follows from the following identity:
\begin{equation}
\sum_{k,l = -n}^n\Big(\mathbb{1} \otimes a_k\bra{k}\Big)e^{-it\sum_{m=-n}^nmH\otimes \ketbra{m}{m}}\Big(\mathbb{1} \otimes \overline{a_l}\ket{l}\Big) = \sum_{m}|a_m|^2e^{-itmH}.
\end{equation}
Defining
\begin{equation}
\Pi_m^0 = \mathbb{1} - \sum_{k=0}^m\ketbra{k}{k} \qquad\text{and}\qquad \Pi_m^1 = \mathbb{1} - \sum_{k=-m}^{0}\ketbra{k}{k},
\end{equation}
we can write $e^{-it\sum_{m=-n}^nmH\otimes \ketbra{m}{m}} = \prod_{m=0}^{n-1} e^{-itH\otimes\Pi_m^0}e^{itH\otimes\Pi_m^1}$, which we can clearly apply at a cost of $2nt$.

The cost of ancilla preparation depends on the admissible operations on the ancilla register, but in a worst-case scenario, each $a_k$ needs to be set separately\footnote{This is the case for the procedure used in lemma \ref{LCUsmallAncilla}.} which means that the cost is $O(n)$. The total cost is then still $O(nt)$.
\end{proof}

\begin{lemma}[LCU with two ancilla qubits] \label{LCUsmallAncilla}
We can achieve the results of \ref{LCUlargeAncilla} only using two ancilla qubits at a time.
\end{lemma}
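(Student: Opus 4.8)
The plan is to emulate the protocol of Lemma~\ref{LCUlargeAncilla} index by index, replacing its $(2n+1)$-dimensional ancilla register with a fixed two-qubit register that is recycled using the measure-and-reprepare capability, while reproducing \emph{exactly} the same post-selected operator $A \defeq \sum_{k=-n}^n |a_k|^2 U^k$ (writing $U \defeq e^{-itH}$) at the same cost $O(nt)$. The enabling observation is that the operator produced on success is a weighted sum of powers of a single unitary: it is the $\ket{f}\!\to\!\ket{f}$ amplitude of the controlled evolution $e^{-itH\otimes M}$, with $\ket{f}=\sum_k a_k\ket{k}$ and $M=\sum_m m\ketbra{m}{m}$, and the weights $|a_m|^2$ form a probability distribution. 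Since both the preparation of $\ket{f}$ and the final projection onto $\bra{f}$ involve the basis state $\ket{m}$ only at ``step $m$'', they can be performed one index at a time. This is exactly the sequential ``set each $a_k$ separately'' preparation anticipated in the footnote to Lemma~\ref{LCUlargeAncilla}.

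Concretely, I would use two qubits: a coin $c$ (``continue'' versus ``stop'') and an accumulator/flag $s$. The system is swept through the powers $m=-n,\dots,n$; on the branch still ``alive'' we apply the controlled advance $U$ or $U^{-1}$ (that is $e^{\mp itH}$, using negative times for $m<0$ exactly as in Lemma~\ref{LCUlargeAncilla}), so that the alive branch at step $m$ has had $U^m$ applied. At each step I peel off, via a rotation of $c$ whose angle encodes $a_m$, the amplitude that ``stops at $m$''; I then fold this sub-branch into the success value of $s$ with weight $\bar a_m$ by projecting $c$ (a measurement), and renormalise the alive branch. Summing the stopped branches coherently into a common success flag produces $\sum_m |a_m|^2 U^m$ applied to the input, the net weight $|a_m|^2$ arising from the rotation and its mirror projection; the leftover non-success weight is precisely the failure outcome of Lemma~\ref{LCUlargeAncilla}. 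Each step costs one controlled application of $e^{\pm itH}$ plus $O(1)$ single- and two-qubit gates and a measurement, giving total cost $O(nt)$, and only $c,s$ are ever live at once because $c$ is re-prepared after each measurement.

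The step I expect to be the main obstacle is establishing that the \emph{coherent} post-selected operator is reproduced rather than a dephased mixture. Mapping the distinct ``stopped at $m$'' branches into a single success flag necessarily discards the which-step information, and this merging is intrinsically non-unitary; it is exactly here that the hypothesis that we may \emph{measure and reprepare} the ancilla qubits is essential, and one must check that performing the merge as a post-selective measurement (rather than recording $m$) yields the coherent sum $\sum_m|a_m|^2 U^m$ with the correct success probability $\Tr\!\big(A\rho A^{\dagger}\big)$, matching Lemma~\ref{LCUlargeAncilla}. The remaining work is the bookkeeping of the sequential renormalisation — choosing the coin angles so that the alive-branch amplitude after step $m$ equals the correct conditional factor — which is a routine induction once the correspondence with the $\ket{f}\!\to\!\ket{f}$ amplitude of $e^{-itH\otimes M}$ is set up. Negative powers and the $O(n)$ cost of setting the $a_k$ individually are handled exactly as in Lemma~\ref{LCUlargeAncilla}.
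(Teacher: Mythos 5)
Your overall architecture---sweep the powers sequentially, keep a recycled two-qubit coin/accumulator register, peel off each $a_m$ by a rotation, apply controlled $e^{\mp itH}$, and herald failure by mid-circuit measurements postselected on a branch-independent ``continue'' outcome---is the same kind of construction the paper relies on: the paper's proof of lemma \ref{LCUsmallAncilla} is a one-line deferral to the sequential two-qubit LCU of \cite{QLSPdiscreteAdiabaticTheorem} (compare the footnote in the proof of lemma \ref{LCUlargeAncilla}, which notes that in that procedure each $a_k$ is set separately). Moreover, your worry about dephasing is resolved exactly as you suggest: if the only classical record on a successful run is ``continue'' at every step, then the conditional evolution is a single composed Kraus operator, so the sum stays coherent; which-path information would leak only if different branches could produce different outcome strings.

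The genuine gap is in the merge step itself, and it invalidates the invariant on which your ``routine induction'' rests. You require a step-$m$ operation acting as $\ket{\mathrm{succ}}\otimes\phi\mapsto\ket{\mathrm{succ}}\otimes\phi$ (accumulated branch untouched) and $\ket{\mathrm{stop}}\otimes\phi\mapsto\overline{a_m}\,\ket{\mathrm{succ}}\otimes\phi$ (stopped branch folded in). Applied to the unit vector $\big(\ket{\mathrm{succ}}+a_m\ket{\mathrm{stop}}\big)\otimes\phi/\sqrt{1+|a_m|^2}$, any such map produces a vector of norm $\sqrt{1+|a_m|^2}>1$, so it is not a contraction; since every Kraus operator of a measurement outcome satisfies $K^\dagger K\le\mathbb{1}$, no combination of unitaries and postselected measurements realises it---the merge is not merely ``non-unitary'', it is non-contractive if the success amplitude is preserved. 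What is physically possible necessarily damps the success branch at every merge: for instance the two-outcome measurement with ``continue'' projector $\ketbra{\mu_m}{\mu_m}+\ketbra{\mathrm{alive}}{\mathrm{alive}}$, $\ket{\mu_m}\propto\ket{\mathrm{succ}}+\overline{c_m}\ket{\mathrm{stop}}$, folds the stopped branch in with relative weight $c_m$ but multiplies the success amplitude by $(1+|c_m|^2)^{-1/2}$. Consequently earlier contributions suffer all later damping factors, so the peel and merge angles must be chosen recursively to compensate, subject to the contraction constraints at every step, and one must then prove that the overall proportionality constant of the resulting operator $\propto\sum_k|a_k|^2e^{-itkH}$ stays $\Omega(1)$; otherwise the success probability collapses and the claimed $O(nt)$ cost with $O(1)$ expected repetitions in theorem \ref{eigenstateFiltering} is lost. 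This feasibility-and-subnormalisation analysis is the real content of the two-qubit lemma---it is what the cited construction supplies---and it is the piece that is missing from (indeed, contradicted by) your sketch as written.
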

The construction is identical to the one in \cite{QLSPdiscreteAdiabaticTheorem}.

\begin{proof}[Proof of theorem \ref{eigenstateFiltering}]
Let $Q \defeq \mathbb{1}-P$ and write $Q = \sum_{j}Q_j$, where each $Q_j$ is an eigenprojector of $H$ associated to the eigenvalue $\omega_j$. Now we observe
\begin{equation}
\Big(\sum_{k=-n}^n|a_k|^2e^{-ikH}\Big)Q_j = \Big(\sum_{k=-n}^n|a_k|^2e^{-ik\omega_j}\Big)Q_j = A(\omega_j)Q_j,
\end{equation}
where $A(\omega)$ is the Fourier transform of the sequence $|a_k|^2$. Thus
\begin{align}
\Big(\sum_{k=-n}^n|a_k|^2e^{-ikH}\Big)Q\rho Q\Big(\sum_{k=-n}^n|a_k|^2e^{ikH}\Big) &= \sum_{j,l}\Big(\sum_{k=-n}^n|a_k|^2e^{-ikH}\Big)Q_j\rho Q_l\Big(\sum_{k=-n}^n|a_k|^2e^{ikH}\Big) \\
&= \sum_{j,l}A(\omega_j)A(-\omega_l)Q_j\rho Q_l.
\end{align}
Taking the trace gives $\Tr\Big(\sum_{j,l}A(\omega_j)A(-\omega_l)Q_j\rho Q_l\Big) \leq \max_{\omega \notin [-\Delta, \Delta]}A(\omega)^2 \Tr(Q\rho Q) \leq \max_{\omega \notin [-\Delta, \Delta]}A(\omega)^2$.
The goal then becomes to find a sequence and its Fourier transform such that $A(\omega_0) = 1$, $\max_{\omega \notin [-\Delta, \Delta]}A(\omega)^2 \leq \epsilon$ and whose window $n$ is as small as possible. The answer to this optimisation problem is well-known and is given by the Dolph-Chebyshev window \cite{DolphChebyshevWindow}. In this case we need a window of\footnote{We note that we improve the scaling by a factor of two compared to \cite{QLSPdiscreteAdiabaticTheorem}. This is because we are able to start from a state where $P\rho Q = 0 = Q\rho P$.}
\begin{align}
n = \frac{\cosh^{-1}(1/\sqrt{\epsilon})}{\cosh^{-1}\big(\sec(\Delta)\big)} \leq \frac{1}{2\Delta}\log\Big(\frac{4}{\epsilon}\Big).
\end{align}
By lemma \ref{LCUlargeAncilla}, we can implement this at a cost of $O(n)$. Note that this procedure terminates succesfully with a probability of at least $\Tr(P\rho_0)$ (which is bounded below) and we can check to see whether the procedure failed. If it failed, we repeat. On average we need to repeat fewer than $\Tr(P\rho_0)^{-1}$ times, which is $O(1)$.
\end{proof}

\section{Applications}
\subsection{Grover search}
For the Grover problem, we have an $N$-dimensional vector space we want to find an element of an $M$-dimensional subspace $\mathcal{M}$. In order to help us, we assume we have access to an oracle Hamiltonian $H_1 = \mathbb{1} - P_\mathcal{M}$, where $P_\mathcal{M}$ is the orthogonal projector on $\mathcal{M}$. In other words, we assume $H_1$ is admissible. We also assume $H_0 = \mathbb{1} - \ketbra{u}{u}$ is admissible, where $\ket{u} = \frac{1}{\sqrt{N}}\sum_{i=1}^N\ket{i}$ is the uniform superposition. The aim is now to use the interpolation $H(s) = (1-s)H_0 + sH_1$ to prepare as state in $\mathcal{M}$. For more details see \cite{farhiAQC} and \cite{Roland_2003}.

We see that $H(s)$ has four eigenvalues:
\begin{align}
\lambda_{1,2} &= \frac{1}{2}\left(1\pm \sqrt{1-4(1- \frac{M}{N})s(1-s)}\right) &\text{with multiplicity $1$} \\
\lambda_{3} &= 1-s &\text{with multiplicity $M-1$}\\
\lambda_{4} &= 1 &\text{with multiplicity $N-M-1$.}
\end{align}
The eigenvectors corresponding to $\lambda_3$ are the eigenvectors in $\mathcal{M}$ with zero overlap with $\ket{u}$. The eigenvectors corresponding to $\lambda_4$ are the eigenvectors in $\mathcal{M}^\perp$ with zero overlap with $\ket{u}$. Since the initial state has zero overlap with any of these vectors and they are eigenvectors of each $H(s)$, none of them are prepared by the procedure and everything happens in the two-dimensional space spanned by the eigenvectors associated to $\lambda_1$ and $\lambda_2$.

We have explicitly computed the gap, so we can use this as the bound $\Delta$:
\begin{equation}
\Delta(s) = \sqrt{1-4(1- \frac{M}{N})s(1-s)}. \label{eq:GroverGap}
\end{equation}
We can set $\Delta_m = \min_{s\in [0,1]} \Delta(s) = \sqrt{M/N}$. In order to give bounds on the time-complexity, we use the following result:
\begin{lemma} \label{lemma:GroverLemma}
For all $p > 1$ and $\Delta$ given by \eqref{eq:GroverGap}, we have
\begin{equation}
\int_0^1 \frac{1}{\Delta(s)^p}\diff{s} = O\big(\sqrt{N/M}^{p-1}\big) = O\big(\Delta_m^{1-p}\big),
\end{equation}
and, for $p=1$,
\begin{equation}
\int_0^1 \frac{1}{\Delta(s)}\diff{s} = O\big(\log(N/M)\big).
\end{equation}
\end{lemma}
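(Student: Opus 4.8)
The plan is to exploit the symmetry of $\Delta$ about $s=\tfrac12$ to reduce both integrals to a single standard form. Writing $a \defeq 1 - M/N$ (so that $\Delta_m^2 = 1-a = M/N$) and completing the square in \eqref{eq:GroverGap}, one finds
\[
\Delta(s)^2 = 1 - 4a\,s(1-s) = \Delta_m^2 + 4a\big(s-\tfrac12\big)^2 .
\]
Hence $\Delta$ is symmetric about $s=\tfrac12$ and attains its minimum $\Delta_m$ there. I would then substitute $v = 2\sqrt{a}\,(s-\tfrac12)/\Delta_m$, which turns the quadratic into $\Delta^2 = \Delta_m^2(1+v^2)$ and gives $\diff{s} = \tfrac{\Delta_m}{2\sqrt a}\diff{v}$, so that
\[
\int_0^1 \frac{\diff{s}}{\Delta(s)^p} = \frac{\Delta_m^{1-p}}{\sqrt{a}}\int_0^{\sqrt a/\Delta_m}\frac{\diff{v}}{(1+v^2)^{p/2}} .
\]
This identity is exact and isolates the factor $\Delta_m^{1-p}$ cleanly; everything now rests on the size of the remaining $v$-integral.

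For $p>1$ the integrand decays like $v^{-p}$, so $\int_0^{\sqrt a/\Delta_m}(1+v^2)^{-p/2}\diff{v} \le \int_0^\infty (1+v^2)^{-p/2}\diff{v} = C_p < \infty$, a finite constant depending only on $p$. In the relevant regime $N/M\to\infty$ we have $a = 1-M/N\to 1$, so the prefactor $1/\sqrt a$ stays bounded: one may assume $M/N\le \tfrac12$ without loss, whence $1/\sqrt a\le\sqrt2$, while the complementary case $M/N>\tfrac12$ gives $\Delta\ge\Delta_m>2^{-1/2}$ and hence $\int_0^1\Delta^{-p}\diff{s}\le\Delta_m^{-p}\le\sqrt2\,\Delta_m^{1-p}$ directly. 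This yields $\int_0^1 \Delta^{-p}\diff{s} = O(\Delta_m^{1-p})$, and substituting $\Delta_m=\sqrt{M/N}$ rewrites it as $O\big(\sqrt{N/M}^{\,p-1}\big)$.

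For $p=1$ the $v$-integral no longer converges as the upper limit grows, which is precisely what produces the logarithm. Here I would evaluate it exactly: $\int_0^{U}(1+v^2)^{-1/2}\diff{v} = \operatorname{arcsinh}(U) = \log\!\big(U + \sqrt{U^2+1}\big)$ with $U=\sqrt a/\Delta_m$. As $\Delta_m=\sqrt{M/N}\to0$ this behaves like $\log(2\sqrt a/\Delta_m) = O(\log(1/\Delta_m)) = O\big(\tfrac12\log(N/M)\big)$, and the bounded prefactor $1/\sqrt a$ does not affect the order, giving $\int_0^1 \Delta^{-1}\diff{s} = O(\log(N/M))$.

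The computation is essentially routine once the completing-the-square identity and the rescaling $v$ are in place; the only points needing care are (i) checking that the prefactor $1/\sqrt a$ remains $O(1)$ in the regime of interest, which is handled by the trivial bound $\Delta\ge\Delta_m$ in the complementary regime, and (ii) the borderline case $p=1$, where convergence of the model integral fails and the logarithmic growth must be extracted from the explicit $\operatorname{arcsinh}$ antiderivative rather than bounded by a constant.
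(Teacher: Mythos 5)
Your proof is correct, but it takes a genuinely different route from the paper's own argument (appendix \ref{appendix:GroverGap}). The paper also starts from the symmetry about $s=\tfrac12$, but then splits the half-interval at $s=\tfrac12-\sqrt{M/N}$: on the piece adjacent to the minimum it uses the crude bound $\Delta^{-p}\le\Delta_m^{-p}$ times the interval length $\sqrt{M/N}$, while on the remaining piece it changes variables from $s$ to $\Delta$ itself, inverting \eqref{eq:GroverGap} and estimating the Jacobian $-\dod{s}{\Delta}=\frac{\Delta}{2\sqrt{(1-M/N)(\Delta^2-M/N)}}$ via $\Delta^2-M/N\ge\tfrac34\Delta^2$ (valid there under the assumption $M/N\le 1/4$), which reduces everything to $\int_{2\sqrt{M/N}}^{1}\Delta^{-p}\diff{\Delta}$. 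Your completing-the-square substitution instead maps the whole integral \emph{exactly} onto the model integral $\frac{\Delta_m^{1-p}}{\sqrt a}\int_0^{U}(1+v^2)^{-p/2}\diff{v}$ in a single step, with no domain splitting: for $p>1$ the $v$-integral is dominated by the convergent improper integral $C_p=\int_0^\infty(1+v^2)^{-p/2}\diff{v}$, and for $p=1$ the logarithm emerges from the explicit $\operatorname{arcsinh}$ antiderivative. What your version buys is a unified computation in which the factor $\Delta_m^{1-p}$ is manifest from an exact identity, together with a slightly more careful treatment of the edge regime (you dispose of $M/N>\tfrac12$ explicitly, whereas the paper silently assumes $M/N\le 1/4$). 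What the paper's version buys is a reusable template: the split-and-invert argument only needs $\Delta$ to be monotone with a computable inverse, not exactly a quadratic under a square root, and the identical skeleton is recycled for the QLSP gap in lemma \ref{lemma:QLSP}. One shared caveat, not specific to your write-up: in both arguments the $p=1$ claim must be read in the regime where $N/M$ is bounded away from $1$, since the integral is always at least $1$ while $\log(N/M)\to 0$ as $M/N\to 1$.
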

We provide a proof in appendix \ref{appendix:GroverGap}.
For constant $\lambda$, we apply theorem \ref{theorem:constantRate} and use the lemma \ref{lemma:GroverLemma} to get a time complexity $O\big(\sqrt{N/M}\log(N/M)\big)$. 

We are able to take the $q$ in corollary \ref{corollary:adaptiveRate} to be anywhere in the range $0<q<1$, since for any such $q$ both $1+q$ and $2-q$ are strictly greater than $1$. This is related to the range of schedules described in \cite{An_QLSP}. The time complexity of the algorithm for any such $q$ is $O(\sqrt{N/M})$, since it is easy to check that to other conditions hold: $\norm{H'} = \norm{H_1 - H_0}$, $\norm{H''} = 0$ and
\begin{align}
|\Delta'| &= \Big|\frac{4(1 - \frac{M}{N})(\frac{1}{2}-s)}{\Delta}\Big| \\
&\leq \frac{2\sqrt{4(1 - \frac{M}{N})(\frac{1}{2}-s)^2}}{\Delta} \\
&\leq \frac{2\sqrt{\frac{M}{N} + 4(1 - \frac{M}{N})(\frac{1}{2}-s)^2}}{\Delta} = 2\frac{\Delta}{\Delta}  =2.
\end{align}

\subsection{Solving linear systems of equations}
The Quantum Linear Systems Problem (QLSP) was introduced in \cite{HHL}. Suppose $A$ is an invertible $N\times N$ matrix $b \in \mathbb{C}^N$ a vector. The goal is to prepare the quantum state $\frac{A^{-1}\ket{b}}{\norm{A^{-1}\ket{b}}}$. We express the time complexity of our algorithm in terms of the condition number $\kappa = \norm{A}\,\norm{A^{-1}}$.

We may restrict ourselves to Hermitian matrices because we can use the following trick from \cite{HHL}: If $A$ is not Hermitian, we consider the matrix $\begin{pmatrix}0 & A \\ A ^* & 0 \end{pmatrix}$, which has the same condition number, and solve the equation $\begin{pmatrix}0 & A \\ A ^* & 0 \end{pmatrix}\ket{y} = \begin{pmatrix}\ket{b} \\ 0\end{pmatrix}$.

First we rescale the matrix $A$ to $\frac{A}{\norm{A}}$. We do this because typically admissible matrices need to be uniformly bounded. This has the effect of shifting the lowest singular value from $\frac{1}{\norm{A^{-1}}}$ to $\frac{1}{\norm{A}\norm{A^{-1}}} = \kappa^{-1}$. Now we consider a path of Hamiltonians that was introduced in \cite{QLSPwithPR}. Define $A(s) \defeq (1-s)\sigma_z\otimes \mathbb{1} + s\sigma_x\otimes A$,  $Q_{b,+} \defeq \mathbb{1} - \big(\ket{+}\ket{b}\big)\big(\bra{+}\bra{b}\big)$ and $\sigma_{\pm} \defeq \frac{1}{2} \big(\sigma_x \pm i \sigma_y\big)$. Set $H(s) = \sigma_+\otimes \big(A(s)Q_{b,+}\big) + \sigma_-\otimes \big(Q_{b,+}A(s)\big)$. This can be written as a linear interpolation $H(s) = (1-s)H_0 + sH_1$, where
\begin{align}
H_0 &\defeq \sigma_+\otimes \big((\sigma_z\otimes \mathbb{1})Q_{b,+}\big) + \sigma_-\otimes \big(Q_{b,+}(\sigma_z\otimes \mathbb{1})\big) \\
H_1 &\defeq \sigma_+\otimes \big((\sigma_x\otimes A)Q_{b,+}\big) + \sigma_-\otimes \big(Q_{b,+}(\sigma_x\otimes A)\big).
\end{align}
Following the analysis of \cite{QLSPwithPR}, we see that $H(s)$ has $0$ as an eigenvalue for all $s\in [0,1]$. The corresponding eigenspace is spanned by $\{\ket{0}\otimes \ket{x(s)}, \ket{1}\otimes \ket{+}\ket{b}\}$, where $\ket{x(s)} \defeq \frac{A(s)^{-1}\ket{b}}{\norm{A(s)^{-1}\ket{b}}}$. Since $H(s)$ does not allow transition between these states, we are sure to not prepare $\ket{1}\otimes \ket{+}\ket{b}$, so long as we start with $\ket{0}\otimes \ket{x(0)}$.

In \cite{QLSPwithPR} it was also shown that the eigenvalue zero is separated from the rest of the spectrum by a gap that is at least
\begin{equation} \Delta(s) = \sqrt{(1-s)^2 + \Big(\frac{s}{\kappa}\Big)^2}. \label{LinSysGap} \end{equation}
If $\kappa$ is large enough, then we can take $\Delta_m \defeq \frac{1}{2\kappa} \leq \sqrt{\frac{1}{\kappa^2 + 1}} = \min_{s\in [0,1]}\Delta(s)$.

In order to give bounds on the time-complexity, we use the following result:
\begin{lemma} \label{lemma:QLSP}
For all $p > 1$, we have
\begin{equation}
\int_0^1 \frac{1}{\Delta(s)^p}\diff{s} = O\big(\kappa^{p-1}\big) = O\big(\Delta_m^{1-p}\big),
\end{equation}
and, for $p=1$,
\begin{equation}
\int_0^1 \frac{1}{\Delta(s)}\diff{s} = O\big(\log(\kappa)\big).
\end{equation}
\end{lemma}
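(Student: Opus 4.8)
The plan is to reduce the integral to a standard form by completing the square inside the radicand. Writing $\Delta(s)^2 = (1-s)^2 + s^2/\kappa^2 = a s^2 - 2s + 1$ with $a \defeq 1 + \kappa^{-2}$, I complete the square as $\Delta(s)^2 = a(s - s^*)^2 + \Delta_{\min}^2$, where $s^* = 1/a = \kappa^2/(\kappa^2+1)$ and $\Delta_{\min}^2 = 1 - 1/a = 1/(\kappa^2+1)$. Two observations will be used repeatedly: first, $1 \leq a \leq 2$ for $\kappa \geq 1$, so $a$ only affects constants; second, $\Delta_{\min} = 1/\sqrt{\kappa^2+1} \geq \Delta_m = 1/(2\kappa)$, exactly as already recorded above.

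Next I substitute $t = \sqrt{a}\,(s-s^*)/\Delta_{\min}$, so that $\Delta(s)^2 = \Delta_{\min}^2(1 + t^2)$ and $\diff{s} = (\Delta_{\min}/\sqrt{a})\,\diff{t}$. This gives
\[
\int_0^1 \frac{1}{\Delta(s)^p}\diff{s} = \frac{\Delta_{\min}^{1-p}}{\sqrt{a}}\int_{t_0}^{t_1}\frac{1}{(1+t^2)^{p/2}}\diff{t},
\]
with limits $t_0 = -\sqrt{a}\, s^*/\Delta_{\min}$ and $t_1 = \sqrt{a}\,(1-s^*)/\Delta_{\min}$. Since $s^*$ lies within $O(\kappa^{-2})$ of $1$, a short computation shows $t_1 = \Theta(1/\kappa)$ is small while $|t_0| = \Theta(\kappa)$ is large; intuitively, the gap attains its minimum near the right endpoint $s=1$.

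For $p>1$ I bound the $t$-integral by the convergent integral $C_p \defeq \int_{\mathbb{R}}(1+t^2)^{-p/2}\diff{t}$, a finite constant independent of $\kappa$. Using $\sqrt{a}\geq 1$ gives $\int_0^1 \Delta^{-p}\diff{s} \leq C_p\,\Delta_{\min}^{1-p} = C_p(\kappa^2+1)^{(p-1)/2} = O(\kappa^{p-1})$; and because $\Delta_{\min}\geq \Delta_m$ with $1-p<0$, the map $x\mapsto x^{1-p}$ is decreasing, so this is also $O(\Delta_m^{1-p})$. For $p=1$ the antiderivative is $\log\!\big(t+\sqrt{1+t^2}\big)$, so the integral equals $\tfrac{1}{\sqrt{a}}\big[\log(t_1+\sqrt{1+t_1^2}) - \log(t_0+\sqrt{1+t_0^2})\big]$; the $t_1$ term is $O(1/\kappa)$, while the $t_0$ term contributes $\log\!\big(|t_0|+\sqrt{1+t_0^2}\big) = O(\log\kappa)$ because $|t_0| = \Theta(\kappa)$, yielding the claimed $O(\log\kappa)$.

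The \emph{main difficulty} is not conceptual — everything reduces to elementary calculus — but purely a matter of bookkeeping: completing the square correctly, recognising that the minimiser $s^*$ sits near the right endpoint so that one limit of integration blows up like $\kappa$ and the other vanishes like $1/\kappa$, and cleanly separating the regime $p>1$ (convergent tails, polynomial growth $\kappa^{p-1}$) from the borderline $p=1$ (logarithmic divergence). I would also explicitly confirm that $a$ stays bounded so that it never enters the asymptotics beyond a constant factor.
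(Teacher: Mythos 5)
Your proof is correct, and it takes a genuinely different route from the paper's. The paper splits the integral at the minimiser $s^* = 1 - \tfrac{1}{\kappa^2+1}$, bounds $\int_{s^*}^1$ trivially by the interval length times $(\min_s\Delta)^{-p}$, and on the monotone piece $[0,s^*]$ changes variables from $s$ to $\Delta$, exactly mirroring its treatment of the Grover gap in appendix C.1. You instead complete the square, $\Delta(s)^2 = a(s-s^*)^2 + \Delta_{\min}^2$ with $a = 1+\kappa^{-2}$, and substitute $t = \sqrt{a}\,(s-s^*)/\Delta_{\min}$, reducing everything to $\int (1+t^2)^{-p/2}\diff{t}$: convergent tails give $O(\Delta_{\min}^{1-p})$ for $p>1$, and the $\log\big(t+\sqrt{1+t^2}\big)$ antiderivative gives $O(\log\kappa)$ at $p=1$ (your handling of the sign of $t_0$ there is right). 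Your route buys two things. First, one exact substitution treats both cases uniformly, with all $\kappa$-dependence isolated in the prefactor $\Delta_{\min}^{1-p}$ and the integration limits $t_0 = -\Theta(\kappa)$, $t_1 = \Theta(1/\kappa)$. Second, and more importantly, it avoids inverting $\Delta(s)$ altogether; this is a real advantage, because the paper's printed inversion $s = \tfrac{\kappa^2}{\kappa^2+1}(1-\Delta)$ is not in fact the inverse of $\Delta(s)$ (it does not even send $\Delta_{\min}$ back to $s^*$) --- the true inverse is $s = \tfrac{1}{a}\big(1 - \sqrt{1-a(1-\Delta^2)}\big)$, whose Jacobian $-\tod{s}{\Delta} = \Delta/\sqrt{a\Delta^2 - \kappa^{-2}}$ diverges at $\Delta = \Delta_{\min}$ rather than being the constant $\tfrac{\kappa^2}{\kappa^2+1}$ used in the paper. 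The paper's Grover proof avoids this by splitting strictly before the minimiser, a care the QLSP proof omits (it can be repaired, e.g.\ via the linear lower bound $\Delta(s) \geq 1-as$ on $[0,s^*]$, or by your substitution). So your argument is not only valid but, on this point, tighter than the printed one.
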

We provide a proof in appendix \ref{appendix:QLSPgap}.

For constant $\lambda$, we apply theorem \ref{theorem:constantRate} and use the lemma \ref{lemma:QLSP} to get a time complexity $O\big(\kappa\log(\kappa)\big)$. This is also the complexity that was obtained in \cite{QLSPwithPR}.

As before, we have a full order reduction for $p>1$ and thus we are able to take the $q$ in corollary \ref{corollary:adaptiveRate} to be anywhere in the range $0<q<1$, since for any such $q$ both $1+q$ and $2-q$ are strictly greater than $1$. If $q=0$ or $q=1$, the complexity gains a factor of $\log(\kappa)$. This exactly mirrors the situation in \cite{An_QLSP} and is the reason why the algorithms for QLSP based on the RM have an extra factor of $\log(\kappa)$ in the asymptotic complexity, see \cite{QLSPwithPR} and \cite{QLSPwithPRnumerics}.

We can apply \ref{corollary:adaptiveRate} since $\norm{H'} = \norm{H_1 - H_0}$, $\norm{H''} = 0$ and
\begin{align}
|\Delta'| &= \Big|\frac{s-1 + s/\kappa^2}{\Delta}\Big| \\
&= \frac{\sqrt{(s-1 + s/\kappa^2)^2}}{\Delta} \\
&= \frac{\sqrt{(1+1/\kappa^2)^2s^2 -(1+1/\kappa^2)2s + 1}}{\Delta} \\
&\leq \frac{\sqrt{(1+1/\kappa^2)^2s^2 -(1+1/\kappa^2)2s + (1+1/\kappa^2)}}{\Delta} \\
&= \sqrt{1+1/\kappa^2}\frac{\Delta}{\Delta} = \sqrt{1+1/\kappa^2} = O(1).
\end{align}
This yields a time complexity of $O(\kappa)$ for fixed error tolerance. The scaling on both condition number and error tolerance is $O(\epsilon^{-1}\kappa)$. By a straightforward application of theorem \ref{eigenstateFiltering} at $s=1$, we get a scaling of $O\big(\log(\epsilon^{-1})\kappa\big)$. This is possible, since we know the eigenvalue of interest is $0$.

This result is optimal and matches the complexity reported in \cite{QLSPdiscreteAdiabaticTheorem}, where it was achieved using a very different method.

\vspace{1em}

This work was supported by the Belgian Fonds de la Recherche Scientifique - FNRS under Grants No. R.8015.21 (QOPT) and O.0013.22 (EoS CHEQS)

\printbibliography

\appendix

\section{Bounds on derivatives of projectors} \label{appendix:projectorBounds}
We provide a proof of lemma \ref{lemma:projectorDerivativeBounds}
\begin{lemma*}
Under the assumptions stated in \ref{assumptions}, we have
\begin{enumerate}
\item $\lVert P'\rVert \leq 2 \frac{\lVert H'\rVert}{\Delta}$;
\item $\lVert P^{\prime\prime}\rVert \leq 8 \frac{\lVert H'\rVert^2}{\Delta^2} + 2 \frac{\lVert H^{\prime\prime}\rVert}{\Delta}$.
\end{enumerate}
\end{lemma*}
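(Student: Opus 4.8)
The plan is to use the standard resolvent (Riesz projection) representation of $P(s)$ and differentiate under the contour integral. Since the eigenvalue $\omega_0(s)$ is separated from the rest of the spectrum by a gap of at least $\Delta(s)$, I can write
\begin{equation}
P(s) = \frac{1}{2\pi i}\oint_{\Gamma(s)} \big(z - H(s)\big)^{-1}\diff{z},
\end{equation}
where $\Gamma(s)$ is a circle in the complex plane centred at $\omega_0(s)$ enclosing only $\omega_0(s)$ and no other part of the spectrum. The natural choice is the circle of radius $\Delta(s)$, on which the resolvent satisfies the bound $\norm{(z-H(s))^{-1}} \le 1/\mathrm{dist}(z,\sigma(H(s)))$; on a circle of radius $\Delta$ the distance from $z$ to the spectrum is at least $\Delta$ at the nearest outside eigenvalue and equals $\Delta$ at the closest approach, giving $\norm{(z-H)^{-1}} \le 1/\Delta$ pointwise on $\Gamma$ (I would need to be slightly careful and possibly use a circle of radius $\Delta/2$ or similar to keep the distance to \emph{both} $\omega_0$ and the outside spectrum controlled — this is where the factors of $2$ originate).

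\medskip
For the first bound, I would differentiate the resolvent using $\frac{d}{ds}(z-H)^{-1} = (z-H)^{-1}H'(z-H)^{-1}$, so that
\begin{equation}
P'(s) = \frac{1}{2\pi i}\oint_{\Gamma} (z-H)^{-1}H'(z-H)^{-1}\diff{z}.
\end{equation}
Taking norms, bounding each resolvent factor by $1/\Delta$ (or $2/\Delta$ depending on the contour radius), pulling out $\norm{H'}$, and multiplying by the circumference $2\pi\Delta$ divided by $2\pi$ gives something of the form $\norm{P'}\le \Delta\cdot \frac{1}{\Delta}\cdot\norm{H'}\cdot\frac{1}{\Delta} = \norm{H'}/\Delta$ up to the constant factor. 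Getting exactly the constant $2$ requires tracking the contour radius, and an alternative cleaner route is to use the commutator identity $P' = [P',P]$ (from $P'=P'P+PP'$, so $QP'P+PP'Q=P'$) together with the fact that $P'$ only has off-diagonal blocks, combined with $[H,P]=0$ to relate $P'$ to $Q H' P/\Delta$-type expressions; either route works.

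\medskip
For the second bound, I would differentiate once more, obtaining $P''$ as a contour integral with terms that are either quadratic in $H'$ with three resolvent factors, or linear in $H''$ with two resolvent factors:
\begin{equation}
P''(s) = \frac{1}{2\pi i}\oint_{\Gamma}\Big(2(z-H)^{-1}H'(z-H)^{-1}H'(z-H)^{-1} + (z-H)^{-1}H''(z-H)^{-1}\Big)\diff{z}.
\end{equation}
Bounding the first term gives a contribution scaling like $\norm{H'}^2/\Delta^2$ and the second like $\norm{H''}/\Delta$, matching the stated $8\norm{H'}^2/\Delta^2 + 2\norm{H''}/\Delta$ once the constants from the contour radius and the factor of $2$ combinatorial count are tracked.

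\medskip
\textbf{The main obstacle} I expect is purely bookkeeping: getting the exact constants $2$, $8$, and $2$ right rather than just the correct $\Delta$-scaling. The crux is choosing the contour radius so that \emph{every} resolvent factor on the contour is bounded cleanly while the circumference contributes the compensating factor of $\Delta$; a radius like $\Delta/2$ or an ellipse may be needed to ensure the distance to the outside spectrum and the distance from the enclosed eigenvalue are both at least a fixed fraction of $\Delta$, and the quadratic term's factor of $8$ suggests two factors of $2$ from the contour bound compounding with the coefficient $2$ from differentiating the product. I would carry out this constant-tracking carefully at the end, after establishing the correct structural form of $P'$ and $P''$.
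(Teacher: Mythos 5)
Your proposal follows essentially the same route as the paper's proof: the Riesz contour representation of $P$, differentiation of the resolvent via $\tod{}{s}(z-H)^{-1} = (z-H)^{-1}H'(z-H)^{-1}$, and norm bounds on a circle of radius $\Delta/2$, which is exactly the contour the paper uses and which yields the constants $2$, $8$, $2$ precisely as you anticipate (each resolvent factor bounded by $2/\Delta$, circumference contributing $\pi\Delta$). Your structural expressions for $P'$ and $P''$ match the paper's, so only the routine constant bookkeeping you deferred remains, and it works out as stated.
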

\begin{proof}
Let $\Gamma$ be a circle in the complex plane, centred at the ground energy with radius $\Delta /2$. Then we have the Riesz form of the projector
\[ P = \frac{1}{2\pi i}\oint_\Gamma R_H(z)\diff{z}, \]
where $R_H(z) = \big(z\id - H\big)^{-1}$ is the resolvent of $H$ at $z$. Then $R_H(z)' = R_H(z)H'R_H(z)$ (the derivative is with respect to $s$, not $z$). As $H$ is a normal operator, the norm $\lVert R_H(z)\rVert$ is equal to the inverse of the distance from $z$ to the spectrum $\sigma(H)$. On the circle $\Gamma$ this is equal to $(\Delta/2)^{-1}$ everywhere. We can then approximate
\begin{align*}
\lVert P'\rVert &= \Big\lVert \frac{1}{2\pi i}\oint_\Gamma R_H(z)' \diff{z} \Big\rVert \\
&\leq  \frac{1}{2\pi}\oint_\Gamma \lVert R_H(z)'\lVert \diff{z} \\
&\leq \frac{1}{2\pi}\oint_\Gamma \lVert R_H(z) \rVert\cdot \lVert H' \rVert\cdot \lVert R_H(z)\rVert\diff{z} \\
&= \frac{1}{2\pi} \Big(\frac{2}{\Delta}\Big)^2 \lVert H'\rVert \oint_\Gamma \diff{z} \\
&= \frac{1}{2\pi} \Big(\frac{2}{\Delta}\Big)^2 2\pi \frac{\Delta}{2} \lVert H'\rVert \\
&= 2\frac{\lVert H'\rVert}{\Delta}.
\end{align*}
Similarly, we can write
\[ P^{\prime\prime} = \frac{1}{2\pi i}\oint_\Gamma 2R_H(z)H'R_H(z)H'R_H(z) + R_H(z)H^{\prime\prime}R_H(z)\diff{z}. \]
Estimating this in the same way as before yields
\[ \lVert P^{\prime\prime}\rVert \leq 8\frac{\lVert H'\rVert^2}{\Delta^2} + 2\frac{\lVert H^{\prime\prime}\rVert}{\Delta}.  \]
\end{proof}

\section{Comparison with the circuit model} \label{appendix:circuitModel}
So far we have assumed access to a device that can evolve a system under a given Hamiltonian in real time, i.e.\ it takes time $t$ to apply $e^{-itH}$. This is the typical setting of AQC and is also the setting of \cite{eigenpathTraversalPR} and \cite{QLSPwithPR}.

Many papers use a slightly different setup. In \cite{QLSPdiscreteAdiabaticTheorem} and \cite{QLSPwithPRnumerics} the setting is a standard quantum computer which is given access to block encodings of $H(s)$ for all $s\in [0,1]$. In this case the complexity is given by the number of times such a block-encoded Hamiltonian is used. In other words, the complexity is a query complexity rather than a time complexity.

Given access to only a block encoding of a Hamiltonian $H$, it is generally not possible to simulate $e^{-itH}$ exactly. Instead, we can use proposition \ref{optimalHamiltonianSimulation}, which is taken from \cite{QLSPwithPRnumerics}.
\begin{proposition}[Theorem 4 from \cite{QLSPwithPRnumerics}] \label{optimalHamiltonianSimulation}
Given access to an $(\alpha , m , 0)$–block-encoding $U_H$ of a Hermitian operator $H$ with $\norm{H} \leq 1$, we can realise a $(1, m + 2, \delta)$–block-encoding of $e^{-itH}$ for $t\in \mathbb{R}$ with
\[ 3\Big\lceil\frac{e}{2}\alpha |t| + \log\big(\frac{2c}{\delta}\big)\Big\rceil \]
calls to $U_H , U_H^*$ with $c = 4(\sqrt{2\pi}e^{\frac{1}{13}})^{-1} \approx 1.47762$.
\end{proposition}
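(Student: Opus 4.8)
The plan is to realise $e^{-itH}$ by applying a quantum signal processing (QSP) circuit, lifted to the block encoding $U_H$ via the quantum singular value transformation (QSVT), that implements a polynomial approximation of the exponential function. Since $U_H$ is an $(\alpha,m,0)$-block-encoding it encodes the rescaled operator $H/\alpha$, whose eigenvalues lie in $[-1,1]$; writing $e^{-itH} = e^{-i(\alpha t)(H/\alpha)}$ shows that the relevant ``effective time'' is $\tau \defeq \alpha t$, which is why $\alpha\lvert t\rvert$ appears in the query count. The core analytic input is the Jacobi--Anger expansion $e^{-i\tau x} = J_0(\tau) + 2\sum_{k\geq 1}(-i)^k J_k(\tau)\,T_k(x)$, which expresses the target function as a Chebyshev series with Bessel-function coefficients $J_k$.

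First I would split $e^{-i\tau x} = \cos(\tau x) - i\sin(\tau x)$ into its even and odd parts, each of which has a definite-parity Chebyshev expansion. Second, I would truncate both series at a degree $n$ and bound the tail using the estimate $\lvert J_k(\tau)\rvert \leq (\lvert\tau\rvert/2)^k/k!$; for $k \gtrsim e\lvert\tau\rvert/2$ the factorial dominates (by Stirling, the term behaves like $(e\lvert\tau\rvert/2k)^k$) and the tail decays super-exponentially, so summing it to precision $\delta$ fixes the degree at $n \approx \tfrac{e}{2}\lvert\tau\rvert + \log(\mathrm{const}/\delta) = \tfrac{e}{2}\alpha\lvert t\rvert + \log(\mathrm{const}/\delta)$. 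Third, I would feed each parity component into a QSVT circuit, which realises a degree-$n$ polynomial of $H/\alpha$ using $n$ alternating calls to $U_H$ and $U_H^*$ together with single-qubit phase rotations on one extra ancilla. Finally, I would combine the cosine and sine blocks by a one-ancilla linear combination (in the spirit of the LCU constructions of lemmas \ref{LCUlargeAncilla} and \ref{LCUsmallAncilla}), which both assembles the full exponential and, because $\lvert e^{-i\tau x}\rvert = 1$, yields a block encoding with subnormalisation $1$; the two added ancillas (one for the QSVT rotations, one for the combination) account for the register growing from $m$ to $m+2$, and the error $\delta$ is inherited from the truncation.

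The hard part will be pinning down the exact constants rather than the asymptotics. The leading factor $\tfrac{e}{2}$ and the precise constant $c = 4(\sqrt{2\pi}\,e^{1/13})^{-1}$ both come from a careful, non-asymptotic summation of the Bessel tail: one must retain the Stirling correction near the crossover $k \approx e\lvert\tau\rvert/2$ and bound the resulting sum tightly enough to extract $c$ exactly, rather than absorbing it into an $O(\cdot)$. Likewise, the overall factor of $3$ in the query count reflects the specific circuit used to stitch the two definite-parity polynomials together and to reconcile the QSP phase conventions, so the count of calls to $U_H$ and $U_H^*$ must be verified at the level of the concrete gate sequence, not merely up to constants. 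Everything else --- the parity decomposition, the QSVT realisation of a bounded polynomial, and the LCU combination --- is routine once these constant-level estimates are secured. Since the statement is quoted verbatim as Theorem~4 of \cite{QLSPwithPRnumerics}, I would ultimately appeal to that reference for the fully optimised constants.
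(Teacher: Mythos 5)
The first thing to note is that the paper does not prove this statement at all: it is imported verbatim as Theorem~4 of \cite{QLSPwithPRnumerics} and used as a black box (the text says explicitly that the proposition ``is taken from \cite{QLSPwithPRnumerics}''). So there is no in-paper proof to compare against, and your final move of deferring to the reference for the optimised constants is in fact exactly what the paper itself does. At the level of strategy, your sketch does follow the standard route behind results of this type (and behind the cited theorem): Jacobi--Anger expansion $e^{-i\tau x} = J_0(\tau) + 2\sum_{k\geq 1}(-i)^k J_k(\tau) T_k(x)$, parity splitting into $\cos(\tau x)$ and $\sin(\tau x)$, truncation at degree roughly $\tfrac{e}{2}|\tau| + \log(1/\delta)$ via Bessel tail bounds (the $e^{1/13}$ in $c$ is indeed a Stirling-type correction), and realisation of the truncated polynomials through QSP/QSVT on the block encoding.

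There is, however, one genuine flaw in your accounting. You claim the one-ancilla linear combination of the cosine and sine blocks ``yields a block encoding with subnormalisation $1$'' because $|e^{-i\tau x}| = 1$. That step fails: an equal-weight LCU of the two definite-parity blocks produces a block encoding of $e^{-i\tau x}/2$, and unitarity of the target does not by itself cancel the factor $1/2$. The standard repair is robust \emph{oblivious amplitude amplification}, which converts a $(2, m+2, \delta)$-encoding of a (near-)unitary into a $(1, m+2, O(\delta))$-encoding by applying the subnormalised block encoding three times --- and this is precisely the origin of the overall factor of $3$ in the stated query count, not ``stitching the parity polynomials together'' or ``QSP phase conventions'' as you suggest. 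As written, your sketch would deliver a subnormalisation-$1/2$ encoding with query count $\lceil \tfrac{e}{2}\alpha|t| + \log(2c/\delta)\rceil$ (no factor $3$), which is not the claimed statement; supplying the amplification step is necessary to close the gap.
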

In this proposition $\delta$ gives the error of the block encoding, i.e. if $U$ is the block encoding, then $\norm{U - e^{-itH}} \leq \delta$.

This motivates replacing the algorithm \ref{procedure} by the algorithm \ref{procedureWithError}, which now depends on both a rate $\lambda(s)$ and an allowable simulation error $\delta(s)$.

\begin{algorithm}
Pick a Poisson process $N: [0,1] \times (\Omega, \mathcal{A}, P)\to \mathbb{N}$ with rate $\lambda(s)$\;
At each jump point $s$ of the Poisson process, pick an instance $t$ of the random variable $T$ as defined in proposition \ref{phaseRandomisation} and use proposition \ref{optimalHamiltonianSimulation} to simulate the evolution under the time-independent Hamiltonian $H(s)$ for a time $t$ with error at most $\delta(s)$\;
\caption{Poisson-distributed phase randomisation with imperfect time evolution} \label{procedureWithError}
\end{algorithm}

In this case the number of queries is bounded by a quantity $Q$ is a random variable with stochastic differential equation $\diff{Q} = 3\Big(\frac{e}{2}\alpha |\tau| + \log\big(\frac{2c}{\delta}\big) + 1\Big)\diff{N}$. Taking the average yields $Q = 3\int_0^1\Big(\frac{e\alpha t_0}{2\Delta} + \log\big(\frac{2c}{\delta}\big) + 1\Big)\lambda\diff{s}$.

To analyse algorithm \ref{procedureWithError}, we prove lemma \ref{lemma:errorBoundWithError}, which is analogous to lemma \ref{lemma:errorBound}.

\begin{lemma} \label{lemma:errorBoundWithError}
Given the assumptions in \ref{assumptions} and that for all $s\in [0,1]$ and $t\in [0, \infty[$, we can apply an operation $A(s,t)$ such that $\norm{e^{-itH(s)} - A(s,t)} \leq \delta(s)$, the algorithm \ref{procedure} with rate $\lambda(s)$ has an error that is bounded by
\begin{equation}
\epsilon \leq \int_0^1 (2\delta + \delta^2)\Big(2\frac{\norm{H'}}{\Delta} + \lambda(s)\Big)\diff{s} + \norm{\lambda(0)^{-1}P'(0)} + \norm{\lambda(1)^{-1}P'(1)} + \int_0^1\bigg(\norm{\frac{P^{\prime\prime}}{\lambda}} + \Big|\Big(\frac{1}{\lambda}\Big)'\Big|\norm{P^{\prime}}\bigg)\diff{s}.
\end{equation}
\end{lemma}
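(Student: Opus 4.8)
The plan is to mirror the proof of lemma \ref{lemma:errorBound} line by line, treating the imperfect evolution as an additive perturbation of the ideal phase-randomisation channel and tracking how that perturbation feeds into the fidelity derivative. First I would write $A(s,\tau) = e^{-i\tau H(s)} + E(s,\tau)$ with $\norm{E(s,\tau)} \leq \delta(s)$, so that the averaged operation applied at each jump becomes $\langle A\rho A^\dagger\rangle = \langle e^{-i\tau H}\rho e^{i\tau H}\rangle + \mathcal{E}(\rho)$, where $\mathcal{E}(\rho) \defeq \langle e^{-i\tau H}\rho E^\dagger + E\rho e^{i\tau H} + E\rho E^\dagger\rangle$. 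Expanding the product and estimating each term with the submultiplicativity of the trace norm (using $\norm{e^{-i\tau H}} = 1$, $\norm{E} \leq \delta$ and $\norm{\rho}_1 = \Tr\rho \leq 1$) gives the key bound $\norm{\mathcal{E}(\rho)}_1 \leq 2\delta + \delta^2$, which is precisely the factor appearing in the statement.

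Feeding the phase-randomisation property of proposition \ref{phaseRandomisation} into the averaged dynamics, the marginalised density matrix now obeys $\rho' = \lambda\big(P\rho P + Q\langle e^{-i\tau H}\rho e^{i\tau H}\rangle Q + \mathcal{E}(\rho) - \rho\big)$, i.e.\ equation \eqref{eq:diffEq} with the extra superoperator $\lambda\mathcal{E}(\rho)$. I would then recompute $\Tr(P\rho)' = \Tr(P'\rho) + \Tr(P\rho')$ as before. The identities $\Tr(P\rho P) = \Tr(P\rho)$ and $QPQ = 0$ again collapse the ideal part of $\Tr(P\rho')$, leaving the single residual $\Tr(P\rho') = \lambda\Tr(P\mathcal{E}(\rho))$. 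Rearranging the perturbed equation exactly as $\rho = P\rho P + Q\langle e^{-i\tau H}\rho e^{i\tau H}\rangle Q + \mathcal{E}(\rho) - \lambda^{-1}\rho'$ and substituting into $\Tr(P'\rho)$, the two projector-sandwiched terms vanish through $PP'P = 0$ and $QP'Q = 0$, yielding $\Tr(P'\rho) = \Tr(P'\mathcal{E}(\rho)) - \Tr(\lambda^{-1}P'\rho')$.

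Collecting these gives $\Tr(P\rho)' = \Tr(P'\mathcal{E}(\rho)) + \lambda\Tr(P\mathcal{E}(\rho)) - \Tr(\lambda^{-1}P'\rho')$. The last term is identical to the one in lemma \ref{lemma:errorBound}, so I would dispose of it by the same integration by parts, producing the boundary contributions $\norm{\lambda(0)^{-1}P'(0)} + \norm{\lambda(1)^{-1}P'(1)}$ and the integral of $\norm{P''/\lambda} + |(1/\lambda)'|\,\norm{P'}$. For the two genuinely new terms I would apply Hölder's inequality together with the bound on $\mathcal{E}$: since $\norm{P} = 1$ we get $|\Tr(P\mathcal{E}(\rho))| \leq \norm{\mathcal{E}(\rho)}_1 \leq 2\delta + \delta^2$, and $|\Tr(P'\mathcal{E}(\rho))| \leq \norm{P'}\,\norm{\mathcal{E}(\rho)}_1 \leq 2\tfrac{\norm{H'}}{\Delta}(2\delta + \delta^2)$, the final step invoking lemma \ref{lemma:projectorDerivativeBounds}. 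Integrating over $[0,1]$ and applying the triangle inequality then assembles exactly the claimed bound.

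I expect the main obstacle to be bookkeeping rather than conceptual: one must check that $\mathcal{E}$ enters both $\Tr(P\rho')$ and $\Tr(P'\rho)$ consistently, and in particular that the rearrangement of the perturbed differential equation used to eliminate $\rho$ is exact. A related subtlety is that $A$ need not be unitary, so $\rho$ may lose a little trace along the way; maintaining $\Tr\rho \leq 1$ throughout is what keeps the $2\delta + \delta^2$ factor clean, and I would confirm this does not degrade any of the Hölder estimates above.
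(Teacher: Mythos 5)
Your proposal is correct and follows essentially the same route as the paper: perturb the differential equation \eqref{eq:diffEq} by the term $\lambda\big(A\rho A^* - e^{-i\tau H}\rho e^{i\tau H}\big)$, recompute $\Tr(P\rho)'$ so that the ideal parts cancel via $PP'P = 0$ and $QP'Q = 0$, reuse the integration by parts of lemma \ref{lemma:errorBound}, and bound the two new trace terms by $(2\delta+\delta^2)$ and $2\tfrac{\norm{H'}}{\Delta}(2\delta+\delta^2)$ respectively. The only cosmetic difference is that you derive the bound $\norm{\mathcal{E}(\rho)}_1 \leq 2\delta + \delta^2$ inline from the decomposition $A = e^{-i\tau H} + E$, whereas the paper cites lemma \ref{simulationErrorLemma} for it; your version is, if anything, slightly more careful (e.g.\ writing $\Tr(P\mathcal{E}(\rho))$ where the paper abbreviates to $\Tr(E)$, and flagging the trace-preservation subtlety that the paper leaves implicit).
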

\begin{proof}
The differential equation \eqref{eq:diffEq} is then
\begin{equation}
\rho' = \lambda\Big(P\rho P + Q\langle e^{-i\tau H}\rho e^{i\tau H} \rangle Q - \rho\Big) + \lambda\Big(A(s,t)\rho A(s,t)^* - e^{-i\tau H}\rho e^{i\tau H}\Big).
\end{equation}
We set $E \defeq A(s,t)\rho A(s,t)^* - e^{-i\tau H}\rho e^{i\tau H}$. Then equations \eqref{eq:firstPartPrho} and \eqref{eq:secondPartPrho} become $\Tr(P\rho') = \lambda \Tr(E)$ and $\Tr(P'\rho) = \Tr(P'E) - \Tr(\lambda^{-1}P'\rho')$, so
\begin{equation}
\Tr(P\rho)' = \Tr(P\rho') + \Tr(P'\rho) = \lambda \Tr(E) + \Tr(P'E) - \Tr(\lambda^{-1}P'\rho').
\end{equation}
The integral of $- \Tr(\lambda^{-1}P'\rho')$ was bounded in the proof of lemma \ref{lemma:errorBound}.

Using lemma \ref{simulationErrorLemma}, we bound $\Tr(|E|) \leq 2\delta + \delta^2$. Together with the bound $\norm{P'} \leq 2 \frac{\norm{H'}}{\Delta}$, this yields the result.
\end{proof}
In this proof we have made use of the following lemma:
\begin{lemma}[Lemma 9 from \cite{QLSPwithPRnumerics}] \label{simulationErrorLemma}
Suppose $A,B$ are operators such that $\norm{A-B} \leq \delta$. Then, for each density operators $\rho$, we have
\begin{equation}
\Tr(|A\rho A^* - B\rho B^*|) \leq 2\delta\norm{A} +\delta^2.
\end{equation}
\end{lemma}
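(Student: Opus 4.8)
The plan is to reduce everything to the single error operator $E \defeq A - B$, for which the hypothesis gives $\norm{E} \leq \delta$, and to exploit the bilinearity of the map $X \mapsto X\rho X^*$ so as to write the difference $A\rho A^* - B\rho B^*$ as a short sum of terms, each carrying at least one factor of $E$. First I would substitute $B = A - E$ and expand, observing that the two pure $A\rho A^*$ contributions cancel:
\[
A\rho A^* - B\rho B^* = A\rho A^* - (A-E)\rho(A-E)^* = A\rho E^* + E\rho A^* - E\rho E^*.
\]
This algebraic identity is the crux of the argument: it isolates $E$ in every surviving term, so that each term will be small once we use $\norm{E} \leq \delta$.

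Next I would pass to the trace norm $\Tr(|\cdot|)$ and apply the triangle inequality to bound $\Tr(|A\rho A^* - B\rho B^*|)$ by the sum of the three trace norms $\Tr(|A\rho E^*|) + \Tr(|E\rho A^*|) + \Tr(|E\rho E^*|)$. To estimate each summand I would invoke the Hölder-type inequalities for the trace norm, namely $\Tr(|XY|) \leq \norm{X}\,\Tr(|Y|)$ and $\Tr(|XY|) \leq \Tr(|X|)\,\norm{Y}$ with $\norm{\cdot}$ the operator norm, together with the facts that $\norm{E^*} = \norm{E} \leq \delta$ and that $\rho$ is a density matrix, so $\Tr(|\rho|) = \Tr(\rho) = 1$. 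Applying these to the two cross terms gives $\Tr(|A\rho E^*|) \leq \norm{A}\,\Tr(|\rho|)\,\norm{E^*} \leq \delta\norm{A}$ and, symmetrically, $\Tr(|E\rho A^*|) \leq \norm{E}\,\Tr(|\rho|)\,\norm{A^*} \leq \delta\norm{A}$, while the quadratic term satisfies $\Tr(|E\rho E^*|) \leq \norm{E}\,\Tr(|\rho|)\,\norm{E^*} \leq \delta^2$.

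Summing the three estimates yields $\Tr(|A\rho A^* - B\rho B^*|) \leq 2\delta\norm{A} + \delta^2$, which is exactly the claimed bound. I do not expect a genuine obstacle here, since the whole argument is a two-line expansion followed by routine norm inequalities. The only points demanding a little care are choosing correctly, in each mixed Hölder step, which factor to measure in operator norm versus trace norm, and tracking the adjoints so that the factor $\norm{A}$ (rather than $\norm{B}$) appears and each cross term contributes precisely $\delta\norm{A}$; getting these choices right is what makes the cross terms collapse to the clean sum $2\delta\norm{A}$.
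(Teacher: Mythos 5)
Your proof is correct. There is nothing in the paper to compare it against: the paper does not prove this lemma at all, but imports it verbatim as Lemma 9 of \cite{QLSPwithPRnumerics}, so your argument fills in a step the paper delegates to the reference. Concretely, each ingredient you use is sound: the expansion $A\rho A^* - (A-E)\rho(A-E)^* = A\rho E^* + E\rho A^* - E\rho E^*$ with $E \defeq A-B$ is a correct identity; the triangle inequality for the trace norm applies; the H\"older-type bound $\Tr(|XYZ|) \leq \norm{X}\,\Tr(|Y|)\,\norm{Z}$ together with $\Tr(|\rho|) = \Tr(\rho) = 1$, $\norm{E^*} = \norm{E} \leq \delta$ and $\norm{A^*} = \norm{A}$ gives exactly $\delta\norm{A} + \delta\norm{A} + \delta^2$. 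A minor remark: the equally standard telescoping split $A\rho A^* - B\rho B^* = A\rho(A-B)^* + (A-B)\rho B^*$ yields $\delta\norm{A} + \delta\norm{B} \leq \delta\norm{A} + \delta(\norm{A}+\delta)$, i.e.\ the same bound; your substitution variant has the small advantage that the factor $\norm{A}$ appears directly in both cross terms without needing to convert $\norm{B}$ into $\norm{A}+\delta$.
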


\begin{theorem} \label{theorem:constantRateWithError}
Under the assumptions in \ref{assumptions}, the algorithm \ref{procedureWithError} produces the target state with fidelity of at least $1-\epsilon$ if $\lambda$ is constant, $\delta = \frac{4\epsilon}{27\lambda}$ and
\begin{equation}
\epsilon^{-1} 4\Big(\frac{\lVert H'(0)\rVert}{\Delta(0)} + \frac{\lVert H'(1)\rVert}{\Delta(1)} + \int_0^1 4 \frac{\lVert H'\rVert^2}{\Delta^2} + \frac{\lVert H^{\prime\prime}\rVert}{\Delta}\diff{s}\Big) \leq \lambda. \label{eq:constantLambdaBound}
\end{equation}
Using the Hamiltonian simulation of proposition \ref{optimalHamiltonianSimulation}, this gives a query complexity of
\[ Q = \lambda\Big(e\alpha t_0\frac{3}{2}\int_0^1\frac{1}{\Delta}\diff{s} + 3\log\big(\frac{27c}{2\epsilon}\big) + \log(\lambda)+1\Big). \]
\end{theorem}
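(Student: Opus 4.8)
The plan is to apply Lemma~\ref{lemma:errorBoundWithError} with constant $\lambda$ and split the resulting bound into an \emph{adiabatic} contribution, identical to the one treated in Theorem~\ref{theorem:constantRate}, and a \emph{simulation} contribution coming from the $(2\delta+\delta^2)(\cdots)$ term. The strategy is to choose $\lambda$ and $\delta$ so that each contribution is at most $\epsilon/2$, giving total error at most $\epsilon$.

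First I would note that because $\lambda$ is constant the term $\big|(1/\lambda)'\big|\norm{P'}$ in Lemma~\ref{lemma:errorBoundWithError} vanishes, and the remaining boundary and $P''$ terms collapse to $\lambda^{-1}\big(\norm{P'(0)}+\norm{P'(1)}+\int_0^1\norm{P''}\diff{s}\big)$. Applying Lemma~\ref{lemma:projectorDerivativeBounds} reproduces exactly the quantity $\lambda^{-1}B$ from the proof of Theorem~\ref{theorem:constantRate}, with $B = 2\big(\tfrac{\norm{H'(0)}}{\Delta(0)}+\tfrac{\norm{H'(1)}}{\Delta(1)}+\int_0^1 4\tfrac{\norm{H'}^2}{\Delta^2}+\tfrac{\norm{H''}}{\Delta}\diff{s}\big)$. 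The hypothesis \eqref{eq:constantLambdaBound} carries a factor $4$ rather than the factor $2$ of Theorem~\ref{theorem:constantRate}, so it reads precisely $\epsilon^{-1}2B\le\lambda$, i.e.\ $\lambda^{-1}B\le\epsilon/2$; hence the adiabatic contribution is at most $\epsilon/2$.

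The substantive step is controlling the simulation contribution $(2\delta+\delta^2)\big(2\int_0^1\tfrac{\norm{H'}}{\Delta}\diff{s}+\lambda\big)$, with $\delta$ constant. The awkward feature is the integral $\int_0^1\tfrac{\norm{H'}}{\Delta}\diff{s}$, which is \emph{not} among the quantities bounded by the hypothesis. I would dominate it by a quantity that is, via the elementary inequality $2\tfrac{\norm{H'}}{\Delta}\le 4\tfrac{\norm{H'}^2}{\Delta^2}+\tfrac14$ (AM--GM), giving $2\int_0^1\tfrac{\norm{H'}}{\Delta}\diff{s}\le \int_0^1\big(4\tfrac{\norm{H'}^2}{\Delta^2}+\tfrac{\norm{H''}}{\Delta}\big)\diff{s}+\tfrac14\le \tfrac{\lambda\epsilon}{4}+\tfrac14$, where the last step uses \eqref{eq:constantLambdaBound}. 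In the regime of interest ($\epsilon\le1$ and $\lambda\ge1$) this yields $2\int_0^1\tfrac{\norm{H'}}{\Delta}\diff{s}+\lambda\le\tfrac{3}{2}\lambda$, so the simulation contribution is at most $\tfrac32\lambda(2\delta+\delta^2)$. Substituting $\delta=\tfrac{4\epsilon}{27\lambda}$ makes $2\delta\lambda=\tfrac{8\epsilon}{27}$ with the $\delta^2\lambda$ term negligible (bounded using $\lambda\ge1$, $\epsilon\le1$), so $\tfrac32\lambda(2\delta+\delta^2)\le\epsilon/2$, which is exactly what the constants $27$ and $4$ are tuned for.

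The query complexity is then a direct computation: starting from $Q=3\int_0^1\big(\tfrac{e\alpha t_0}{2\Delta}+\log(\tfrac{2c}{\delta})+1\big)\lambda\diff{s}$, I would pull the constants $\lambda,\delta$ out of the integral and simplify $\log(2c/\delta)=\log(27c/(2\epsilon))+\log\lambda$ using $\delta=4\epsilon/(27\lambda)$; collecting terms gives the stated expression for $Q$. I expect the main obstacle to be precisely the appearance of $\int_0^1\tfrac{\norm{H'}}{\Delta}\diff{s}$ in the simulation term: everything else is bookkeeping inherited from Theorem~\ref{theorem:constantRate}, and it is the AM--GM reduction of this stray integral against the controlled $\int\tfrac{\norm{H'}^2}{\Delta^2}$ term, together with the mild regime assumption $\lambda\ge1$, that closes the argument.
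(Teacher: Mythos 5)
Your proposal has the same skeleton as the paper's proof: invoke Lemma~\ref{lemma:errorBoundWithError}, bound the adiabatic contribution by $\epsilon/2$ exactly as in Theorem~\ref{theorem:constantRate} (the factor $4$ in \eqref{eq:constantLambdaBound} versus the factor $2$ there is used in the same way), bound the simulation contribution by $\epsilon/2$ via the choice $\delta=4\epsilon/(27\lambda)$, and compute $Q$ by substituting $\delta$ into the averaged query count. The one place you genuinely diverge is the step you yourself single out as the obstacle: controlling $\int_0^1 \norm{H'}/\Delta\,\diff{s}$, which is not among the quantities appearing in \eqref{eq:constantLambdaBound}. The paper disposes of it in one line, asserting that positivity of the terms in \eqref{eq:constantLambdaBound} gives $\int_0^1 2\norm{H'}/\Delta\,\diff{s}\le\epsilon\lambda/8$; strictly speaking that deduction needs an additional comparison (e.g.\ $\Delta\le\norm{H'}$ pointwise, so that $\norm{H'}/\Delta\le\norm{H'}^2/\Delta^2$, or Cauchy--Schwarz together with $\epsilon\lambda\ge 16$), so the paper is gliding over the same difficulty you noticed. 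Your AM--GM bound $2\norm{H'}/\Delta\le 4\norm{H'}^2/\Delta^2+1/4$ is a clean, rigorous substitute, but it costs you the explicit regime assumptions $\epsilon\le 1$ (harmless, since the statement is vacuous otherwise) and $\lambda\ge 1$, which is \emph{not} among the theorem's hypotheses: for a problem with very small $\norm{H'}$ and $\norm{H''}$, condition \eqref{eq:constantLambdaBound} admits $\lambda<1$, and there your absorption of the additive $1/4$ fails. So your argument proves a marginally weaker statement than the one claimed, while the paper's argument claims full generality but leaves its corresponding step unjustified; the two weak points sit at exactly the same place, and your constants ($3\delta\lambda=4\epsilon/9$ plus a $\delta^2$ remainder below $\epsilon/18$) do close to under $\epsilon/2$ as you say.

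On the query complexity, carrying out your own computation exactly gives $Q=\lambda\big(\tfrac{3}{2}e\alpha t_0\int_0^1\Delta^{-1}\diff{s}+3\log(27c/(2\epsilon))+3\log\lambda+3\big)$, i.e.\ the terms $3\log\lambda+3$ rather than the stated $\log\lambda+1$, so ``collecting terms'' does not literally reproduce the displayed expression; the paper's final display suffers from the same factor-of-three slip, and it is asymptotically immaterial, but you should not claim exact agreement.
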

\begin{proof}
Let $\epsilon_0$ be the actual error of the algorithm. We need $\epsilon_0\leq \epsilon$. As in the proof of \ref{theorem:constantRate}, rewrite the inequality in lemma \ref{lemma:errorBoundWithError} as
\begin{equation}
\epsilon_0 \leq (2\delta + \delta^2)\Big(\int_0^12\frac{\norm{H'}}{\Delta}\diff{s} + \lambda\Big) + \lambda^{-1}\Big(2\frac{\lVert H'(0)\rVert}{\Delta(0)} + 2\frac{\lVert H'(1)\rVert}{\Delta(1)} + \int_0^1 8 \frac{\lVert H'\rVert^2}{\Delta^2} + 2 \frac{\lVert H^{\prime\prime}\rVert}{\Delta}\diff{s}\Big).
\end{equation}
As in the proof of \ref{theorem:constantRate}, the second term is bounded by $\epsilon / 2$. Since all the terms of equation \eqref{eq:constantLambdaBound} are positive, we have $\int_0^12\frac{\norm{H'}}{\Delta}\diff{s} \leq \frac{\epsilon \lambda}{8}$. Then we bound
\begin{align}
(2\delta + \delta^2)\Big(\int_0^12\frac{\norm{H'}}{\Delta}\diff{s} + \lambda\Big) &\leq 3\delta\big( \frac{\epsilon \lambda}{8} + \lambda\big) \\
&\leq \frac{27}{8}\delta\lambda(\epsilon + 1) \leq \frac{27}{8}\delta\lambda \leq \frac{\epsilon}{2}.
\end{align}

Finally we consider the query complexity and calculate
\begin{align}
Q &= 3\int_0^1\Big(\frac{e\alpha t_0}{2\Delta} + \log\big(\frac{2c}{\delta}\big) + 1\Big)\lambda\diff{s} \\
&\leq \lambda\Big(e\alpha t_0\frac{3}{2}\int_0^1\frac{1}{\Delta}\diff{s} + 3\log\big(\frac{27c}{2\epsilon}\big) + \log(\lambda)+1\Big).
\end{align}
\end{proof}

\begin{theorem} \label{theorem:adaptiveRateWithError}
Under the assumptions in \ref{assumptions}, we additionally assume that there exists $0\leq q\leq 1$ and $B_1,B_2$ such that $\int_0^1 \frac{1}{\Delta^{1+q}}\diff{s} \leq B_1\Delta_m^{-q}$ and $\int_0^1 \frac{1}{\Delta^{2-q}}\diff{s} \leq B_2\Delta_m^{q-1}$ for all instances of the problem. Then algorithm \ref{procedureWithError} produces the target state with a fidelity of at least $1-\epsilon$ if
\begin{align}
\lambda &= \epsilon^{-1}\frac{2C}{\Delta^q\Delta_m^{1-q}} \\
\delta &= \frac{2\epsilon}{15\lambda}
\end{align}
where $C \defeq 2\sup_{s\in[0,1]}\Big(2\norm{H'(s)} + 4\norm{H'(s)}^2B_2 + \norm{H''(s)} + q|\Delta'(s)|\,\norm{H'(s)}B_2 \Big)$.

If, in addition, there exists a constant $B_3$ such that $\int_0^1 \frac{1}{\Delta^{2q}}\diff{s} \leq B_3\Delta_m^{1-2q}$ and the Hamiltonian simulation of \ref{optimalHamiltonianSimulation} is used, this gives a query complexity of
\[ Q \leq \frac{1}{\epsilon\Delta_m}\Big(12C\log(\epsilon^{-1}) + 3e\alpha t_0CB_1 + 6\log(15c)C + 12C^2B_3\Big). \]
\end{theorem}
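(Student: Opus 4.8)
The plan is to treat this as the imperfect-simulation analogue of theorem \ref{theorem:adaptiveRate}, built on lemma \ref{lemma:errorBoundWithError} in exactly the way theorem \ref{theorem:constantRateWithError} was built on lemma \ref{lemma:errorBound}. The bound of lemma \ref{lemma:errorBoundWithError} splits into an \emph{adiabatic} group (the terms $\norm{\lambda(0)^{-1}P'(0)} + \norm{\lambda(1)^{-1}P'(1)} + \int_0^1(\norm{P''/\lambda} + |(1/\lambda)'|\norm{P'})\diff{s}$, identical to lemma \ref{lemma:errorBound}) and a \emph{simulation} group $\int_0^1(2\delta+\delta^2)(2\norm{H'}/\Delta + \lambda)\diff{s}$. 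I would show each group is at most $\epsilon/2$.

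For the adiabatic group, observe that the chosen $\lambda = \epsilon^{-1}\tfrac{2C}{\Delta^q\Delta_m^{1-q}}$ is exactly twice the rate used in theorem \ref{theorem:adaptiveRate}. Since every term in this group is homogeneous of degree one in $1/\lambda$ (including the $(1/\lambda)'$ term, as $1/\lambda$ is merely rescaled by a constant), and theorem \ref{theorem:adaptiveRate} already bounds this group by $\epsilon$ for the single-rate choice, doubling $\lambda$ halves each term, giving the bound $\epsilon/2$ with no further work. For the simulation group, I would substitute $\delta = \tfrac{2\epsilon}{15\lambda}$, use $2\delta + \delta^2 \le 3\delta$ (valid once $\delta \le 1$), and reduce to $\tfrac{2\epsilon}{5}\int_0^1(\tfrac{2\norm{H'}}{\lambda\Delta} + 1)\diff{s}$. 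Substituting $1/\lambda$, the integral $\int_0^1 \tfrac{2\norm{H'}}{\lambda\Delta}\diff{s}$ collapses via $\int_0^1\Delta^{-(1-q)}\diff{s}\le\Delta_m^{-(1-q)}$ to $\tfrac{\epsilon\sup_s\norm{H'}}{C} \le \epsilon/4$, the last step using $C \ge 4\sup_s\norm{H'}$ from the definition of $C$; together with $\epsilon \le 1$ this yields $\epsilon/2$. Summing the two groups gives total infidelity at most $\epsilon$.

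For the query complexity I would start from $Q = 3\int_0^1\big(\tfrac{e\alpha t_0}{2\Delta} + \log(\tfrac{2c}{\delta}) + 1\big)\lambda\diff{s}$ and treat the three contributions separately. The $\tfrac{e\alpha t_0}{2\Delta}$ term is immediate: $\int_0^1\tfrac{\lambda}{\Delta}\diff{s} = 2C\epsilon^{-1}\Delta_m^{q-1}\int_0^1\Delta^{-(1+q)}\diff{s}$, which the hypothesis $\int_0^1\Delta^{-(1+q)}\diff{s}\le B_1\Delta_m^{-q}$ turns into the $3e\alpha t_0 CB_1/(\epsilon\Delta_m)$ term. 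After substituting $\delta$, the logarithm becomes $\log(\tfrac{15c\lambda}{\epsilon}) = \log(15c) + 2\log(\epsilon^{-1}) + \log(2C) + q\log(1/\Delta) + (1-q)\log(1/\Delta_m)$; the $s$-independent pieces, multiplied by $3\int_0^1\lambda\diff{s}\le 6C\epsilon^{-1}\Delta_m^{-1}$ (using $\int_0^1\Delta^{-q}\diff{s}\le\Delta_m^{-q}$), produce the $12C\log(\epsilon^{-1})$ and $6\log(15c)C$ terms.

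The hard part will be the remaining $s$-dependent piece $3q\int_0^1\log(1/\Delta)\,\lambda\diff{s}$, which is exactly why the extra hypothesis $\int_0^1\Delta^{-2q}\diff{s}\le B_3\Delta_m^{1-2q}$ is introduced. Here I would apply the elementary inequality $q\log(1/\Delta)\le\Delta^{-q}$ (set $y=\Delta^{-q}\ge 1$ and use $\log y\le y$), so that $3q\int_0^1\log(1/\Delta)\lambda\diff{s}\le 3\int_0^1\Delta^{-q}\lambda\diff{s} = 6C\epsilon^{-1}\Delta_m^{q-1}\int_0^1\Delta^{-2q}\diff{s}$, which the $B_3$ hypothesis bounds by $6CB_3\epsilon^{-1}\Delta_m^{-q}\le 6CB_3\epsilon^{-1}\Delta_m^{-1}$. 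Collecting everything into the prefactor $\tfrac{1}{\epsilon\Delta_m}$ then gives the stated four-term bound. The genuine fiddliness — and where I would be most careful — is folding the leftover constant factors (the $\log(2C)$ and $\log(1/\Delta_m)$ pieces and the stray $+1$) together with this $\log(1/\Delta)$ integral so that they all fit inside the single $12C^2B_3$ coefficient, which requires absorbing lower-order contributions with size bounds such as $C\ge 1$ and $\Delta_m^{-q}\le\Delta_m^{-1}$ rather than tracking each term exactly.
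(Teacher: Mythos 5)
Your fidelity analysis is correct and coincides with the paper's: the adiabatic group is exactly the bound of theorem \ref{theorem:adaptiveRate} scaled by $1/2$ because $\lambda$ is twice that theorem's rate, and your bound on the simulation group (via $2\delta+\delta^2\le 3\delta$ and $C\ge 4\sup_s\norm{H'}$) reproduces the paper's computation. The $B_1$ term and the $s$-independent pieces $\log(15c)+2\log(\epsilon^{-1})$ of the query count also match.

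The gap is in your treatment of $\log\lambda$. You expand $\log(15c\lambda/\epsilon)$ into $\log(15c)+2\log(\epsilon^{-1})+\log(2C)+q\log(1/\Delta)+(1-q)\log(1/\Delta_m)$ and apply $\log y\le y$ only to the $\Delta$-dependent piece. The piece $(1-q)\log(1/\Delta_m)$ is then left multiplying $3\int_0^1\lambda\,\diff{s}\le 6C\epsilon^{-1}\Delta_m^{-1}$, which produces a contribution of order $C(1-q)\log(1/\Delta_m)/(\epsilon\Delta_m)$. This is not a ``leftover constant factor'': $\log(1/\Delta_m)$ diverges as $\Delta_m\to 0$, while the claimed bound contains no such factor, so no absorption into $12C^2B_3/(\epsilon\Delta_m)$ is possible in general. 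Concretely, for $1/2<q<1$ and a V-shaped gap profile $\Delta(s)=\max\big(\Delta_m,\min(s,1-s)\big)$, all of $B_1$, $B_2$, $B_3$, $|\Delta'|$ and hence $C$ stay $O(1)$ as $\Delta_m\to 0$, yet your extra term grows like $\log(1/\Delta_m)/(\epsilon\Delta_m)$ and eventually exceeds the stated right-hand side. Nor can you rescue it by applying $\log y\le y$ to the $\Delta_m$ piece on its own: that gives $\Delta_m^{-(1-q)}\cdot 3\int_0^1\lambda\,\diff{s}\sim C\epsilon^{-1}\Delta_m^{-(2-q)}$, which is polynomially worse than $\Delta_m^{-1}$. (Your fallback $C\ge 1$ is also not granted by the hypotheses.)

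The repair --- and what the paper actually does --- is to never split the two logarithms: write $\log(2c/\delta)=\log(15c/\epsilon^2)+\log(\epsilon\lambda)$ and bound $\log(\epsilon\lambda)+1\le\epsilon\lambda$ with $\epsilon\lambda=2C\Delta^{-q}\Delta_m^{-(1-q)}$ kept intact, absorbing at one stroke your $\log(2C)$, $q\log(1/\Delta)$, $(1-q)\log(1/\Delta_m)$ and the stray $+1$. The offending contribution then becomes $3\int_0^1\epsilon\lambda^2\,\diff{s}=\frac{12C^2}{\epsilon\Delta_m^{2-2q}}\int_0^1\Delta^{-2q}\,\diff{s}\le\frac{12C^2B_3}{\epsilon\Delta_m}$, where the $B_3$ hypothesis is used and the $\Delta_m$ powers close exactly: $\Delta_m^{-(2-2q)}\Delta_m^{1-2q}=\Delta_m^{-1}$. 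The point is that the power counting works only when the full factor $\Delta^{-q}\Delta_m^{-(1-q)}$ enters \emph{squared}; splitting it between a polynomial part and a logarithmic part, as you do, is precisely what breaks the bound.
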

\begin{proof}
As before, we need to bound the inequality in lemma \ref{lemma:errorBoundWithError}. Everything except the first term has already been bounded in the proof of theorem \ref{theorem:adaptiveRate} to be less than $\epsilon / 2$. (Notice that we are taking $\lambda$ to be twice the rate specified in theorem \ref{theorem:adaptiveRate}.)

We now need to show that the first term in the inequality in lemma \ref{lemma:errorBoundWithError} can be bounded by $\epsilon/2$. Indeed, we calculate
\begin{align}
\int_0^1 (2\delta + \delta^2)\Big(2\frac{\norm{H'}}{\Delta} + \lambda(s)\Big)\diff{s} &\leq \int_0^1 3\delta\Big(\frac{C}{2\Delta} + \lambda\Big)\diff{s} \\
&= \int_0^1 \frac{2\epsilon}{5\lambda}\Big(\frac{C}{2\Delta} + \lambda\Big)\diff{s} \\
&= \frac{2\epsilon}{5}\Big(1 + \int_0^1 \epsilon\frac{\Delta_m^{q-1}}{4\Delta^{q-1}}\diff{s}\Big) \\
&\leq \frac{2\epsilon}{5}\Big(1 + \frac{\epsilon}{4}\Big) \leq \frac{\epsilon}{2}.
\end{align}

Finally we consider the query complexity and calculate, using the fact that $\log(x)+1 \leq x$ for all positive $x$,
\begin{align}
Q &= 3\int_0^1\Big(\frac{e\alpha t_0}{2\Delta} + \log\big(\frac{2c}{\delta}\big) + 1\Big)\lambda\diff{s} \\
&= 3\int_0^1\Big(\frac{e\alpha t_0}{2\Delta} + \log\big(\frac{15c}{\epsilon}\big) + \log(\epsilon\lambda) + 1\Big)\lambda\diff{s} \\
&\leq 3\int_0^1\Big(\frac{e\alpha t_0}{2\Delta}\lambda + \log\big(\frac{15c}{\epsilon^2}\big)\lambda + \epsilon\lambda^2\Big)\diff{s}.
\end{align}
We bound each term separately. First
\begin{align}
3\int_0^1\frac{e\alpha t_0}{2\Delta}\lambda\diff{s} &= \frac{3e\alpha t_0C}{\epsilon\Delta_m^{1-q}}\int_0^1\frac{1}{\Delta^{q+1}}\lambda\diff{s} \\
&\leq \frac{3e\alpha t_0C}{\epsilon\Delta_m^{1-q}}B_1\Delta_m^{-q} = \frac{3e\alpha t_0CB_1}{\epsilon\Delta_m}.
\end{align}
Next
\begin{align}
3\int_0^1 \log\big(\frac{15c}{\epsilon^2}\big)\lambda \diff{s} &= 3\log\big(\frac{15c}{\epsilon^2}\big)\epsilon^{-1}\int_0^1\frac{2C}{\Delta^q\Delta_m^{1-q}}\diff{s} \\
&\leq \log\big(\frac{15c}{\epsilon^2}\big)\epsilon^{-1}\frac{6C}{\Delta_m}.
\end{align}
Finally
\begin{align}
3\int_0^1 \epsilon\lambda^2 \diff{s} &= \frac{12C^2}{\epsilon\Delta_m^{2-2q}}\int_0^1\frac{1}{\Delta^{2q}} \diff{s} \\
&\leq \frac{12C^2B_3}{\epsilon\Delta_m^{2-2q}}\Delta_m^{1-2q} = \frac{12C^2B_3}{\epsilon\Delta_m}.
\end{align}
Putting everything together yields the query complexity.
\end{proof}
\begin{corollary} \label{corollary:adaptiveRateWithError}
If $\int_0^1 \frac{1}{\Delta^p} \diff{s} = O(\Delta_m^{1-p})$ holds for all $p>1$, $|\Delta'| = O(1)$, $\norm{H'} = O(1)$ and $\norm{H''} = O(1)$, then algorithm \ref{procedureWithError} with the Hamiltonian simulation of \ref{optimalHamiltonianSimulation} and the parameters of \ref{theorem:adaptiveRateWithError} for some $1/2 <q <1$, produces a state with fidelity larger than $1-\epsilon$ using a number of queries that scales as $O\big(\Delta_m^{-1}\epsilon^{-1}\log(\epsilon^{-1})\big)$.
\end{corollary}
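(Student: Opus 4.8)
The plan is to obtain this corollary as a direct specialisation of Theorem~\ref{theorem:adaptiveRateWithError}: under the stated hypotheses I would show that all three integral bounds required by that theorem hold with constants $B_1, B_2, B_3 = O(1)$, that the prefactor $C$ is $O(1)$, and then simply read off the query complexity from Theorem~\ref{theorem:adaptiveRateWithError} and simplify the bracketed expression as $\epsilon \to 0$. Since the fidelity guarantee $1-\epsilon$ is inherited verbatim once the hypotheses are confirmed, no separate error analysis is needed and the entire argument reduces to bookkeeping.

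First I would verify the three integral conditions of Theorem~\ref{theorem:adaptiveRateWithError} by instantiating the hypothesis $\int_0^1 \Delta^{-p}\diff{s} = O(\Delta_m^{1-p})$ at the three relevant exponents. Taking $p = 1+q$ gives $\int_0^1 \Delta^{-(1+q)}\diff{s} = O(\Delta_m^{-q})$, so some constant $B_1$ works; taking $p = 2-q$ gives $\int_0^1 \Delta^{-(2-q)}\diff{s} = O(\Delta_m^{q-1})$, yielding $B_2$; and taking $p = 2q$ gives $\int_0^1 \Delta^{-2q}\diff{s} = O(\Delta_m^{1-2q})$, yielding $B_3$. The crucial point is that the hypothesis is only available for $p>1$, so these three substitutions are legitimate precisely when $1+q$, $2-q$, and $2q$ all exceed $1$. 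The binding constraint is $2q>1$, i.e.\ $q>1/2$, which is exactly the restriction imposed in the statement and the reason this corollary is more restrictive than Corollary~\ref{corollary:adaptiveRate}, which requires only the first two integrals. I expect this identification of the binding exponent to be the only genuine subtlety, even though each individual substitution is routine.

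Next I would argue $C = O(1)$. By definition $C \defeq 2\sup_{s}\big(2\norm{H'} + 4\norm{H'}^2 B_2 + \norm{H''} + q|\Delta'|\,\norm{H'}B_2\big)$, and since $\norm{H'}$, $\norm{H''}$, and $|\Delta'|$ are all $O(1)$ by hypothesis and $B_2 = O(1)$ by the previous step, every term in the supremum is $O(1)$, hence $C = O(1)$. The simulation parameters $\alpha$, $t_0$, and $c$ from proposition~\ref{optimalHamiltonianSimulation} are fixed constants of the cost model, so they are $O(1)$ as well.

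Finally I would substitute into the query-complexity bound of Theorem~\ref{theorem:adaptiveRateWithError},
\[ Q \leq \frac{1}{\epsilon\Delta_m}\Big(12C\log(\epsilon^{-1}) + 3e\alpha t_0 C B_1 + 6\log(15c)C + 12C^2 B_3\Big). \]
Because $C, B_1, B_3, \alpha, t_0, c$ are all $O(1)$, the last three summands inside the bracket are $O(1)$ while the first is $\Theta(\log(\epsilon^{-1}))$. As $\epsilon \to 0$ the logarithmic term eventually dominates the constant terms, so the bracket is $O(\log(\epsilon^{-1}))$, giving $Q = O\big(\Delta_m^{-1}\epsilon^{-1}\log(\epsilon^{-1})\big)$, as claimed.
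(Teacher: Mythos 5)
Your proof is correct and matches the argument the paper intends (the paper states this corollary without proof, as an immediate specialisation of Theorem~\ref{theorem:adaptiveRateWithError}): instantiating the hypothesis at $p=1+q$, $p=2-q$, $p=2q$ gives $B_1,B_2,B_3=O(1)$, hence $C=O(1)$, and the stated query bound simplifies to $O\big(\Delta_m^{-1}\epsilon^{-1}\log(\epsilon^{-1})\big)$. You also correctly pinpoint $2q>1$ as the binding constraint explaining the restriction to $1/2<q<1$, which is exactly the point the paper relies on but leaves implicit.
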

The asymptotic scaling in the error $\epsilon$ is slightly worse here, since there is an extra logarithmic factor, but this is not an issue if we want to apply eigenstate filtering. With eigenstate filtering the scaling in the error is still $O(\log(1/\epsilon))$.

\section{Gap properties}
\subsection{The gap in the Grover problem} \label{appendix:GroverGap}
For the Grover problem we have the following gap:
\begin{equation}
\Delta(s) = \sqrt{1-4(1- \frac{M}{N})s(1-s)}. \label{eq:GroverGapAppendix}
\end{equation}
We can set $\Delta_m = \min_{s\in [0,1]} \Delta(s) = \sqrt{M/N}$. We provide a proof of lemma \ref{lemma:GroverLemma}.

\begin{lemma*} 
For all $p > 1$ and $\Delta$ given by \eqref{eq:GroverGapAppendix}, we have
\begin{equation}
\int_0^1 \frac{1}{\Delta(s)^p}\diff{s} = O\big(\sqrt{N/M}^{p-1}\big) = O\big(\Delta_m^{1-p}\big),
\end{equation}
and, for $p=1$,
\begin{equation}
\int_0^1 \frac{1}{\Delta(s)}\diff{s} = O\big(\log(N/M)\big).
\end{equation}
\end{lemma*}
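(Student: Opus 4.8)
The plan is to exploit the symmetry of $\Delta$ about $s = \tfrac12$ to collapse both integrals into a single standard form. Writing $c \defeq 1 - M/N$ and substituting $u = s - \tfrac12$, I would use the elementary identity $s(1-s) = \tfrac14 - u^2$ to rewrite the integrand. Since $\Delta_m^2 = 1 - c = M/N$, this gives
\[
\Delta(s)^2 = 1 - 4c\big(\tfrac14 - u^2\big) = (1-c) + 4c\,u^2 = \Delta_m^2 + 4c\,u^2,
\]
so that by symmetry $\int_0^1 \Delta^{-p}\diff{s} = 2\int_0^{1/2}\big(\Delta_m^2 + 4c\,u^2\big)^{-p/2}\diff{u}$.

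Next I would rescale to extract the $\Delta_m$-dependence explicitly. Setting $v = 2\sqrt{c}\,u/\Delta_m$ yields $\Delta_m^2 + 4c\,u^2 = \Delta_m^2(1 + v^2)$ and $\diff{u} = \tfrac{\Delta_m}{2\sqrt{c}}\diff{v}$, hence
\[
\int_0^1 \frac{1}{\Delta^p}\diff{s} = \frac{\Delta_m^{1-p}}{\sqrt{c}}\int_0^{\sqrt{c}/\Delta_m}\frac{1}{(1+v^2)^{p/2}}\diff{v}.
\]
Because the asymptotics are taken in the regime $N/M \to \infty$, we have $M/N \to 0$ and $c \to 1$, so the prefactor $1/\sqrt{c}$ is bounded by a constant and may be absorbed into the big-$O$.

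It then remains to control the remaining integral. For $p > 1$ the decisive observation is that $\int_0^\infty (1+v^2)^{-p/2}\diff{v}$ \emph{converges}, since the integrand decays like $v^{-p}$; thus the truncated integral up to $\sqrt{c}/\Delta_m$ is bounded by this constant \emph{uniformly} in $N$ and $M$, which gives $\int_0^1 \Delta^{-p}\diff{s} = O(\Delta_m^{1-p})$. Using $\Delta_m^{1-p} = (M/N)^{(1-p)/2} = \sqrt{N/M}^{\,p-1}$ recovers the stated form. For $p = 1$ the integral is exactly $\sinh^{-1}\!\big(\sqrt{c}/\Delta_m\big) = \log\!\big(\sqrt{c}/\Delta_m + \sqrt{1 + c/\Delta_m^2}\big)$, which grows like $\log(1/\Delta_m) = O\big(\log\sqrt{N/M}\big) = O\big(\log(N/M)\big)$.

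The only step requiring genuine care is justifying that the constant hidden in the big-$O$ is truly independent of $N$ and $M$: for $p>1$ this is precisely what the convergence of the improper integral supplies, while for $p=1$ one must track the logarithmic divergence of the upper limit $\sqrt{c}/\Delta_m$. Beyond this bookkeeping there is no serious analytic obstacle.
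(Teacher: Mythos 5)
Your proof is correct, and it takes a genuinely different route from the paper's. You exploit the fact that the Grover gap is \emph{exactly} a quadratic well: completing the square via $u = s - \tfrac12$ gives the identity $\Delta^2 = \Delta_m^2 + 4cu^2$ with $c = 1 - M/N$, after which a single rescaling $v = 2\sqrt{c}\,u/\Delta_m$ reduces everything to the classical integral $\int_0^{X}(1+v^2)^{-p/2}\diff{v}$; the case $p>1$ then follows from convergence of the improper integral (uniformly in $N,M$, with constant depending only on $p$), and the case $p=1$ from the exact evaluation $\sinh^{-1}(X) = O(\log(1/\Delta_m))$. The paper instead splits the integration domain into a neighbourhood $[\tfrac12 - \sqrt{M/N}, \tfrac12]$ of the minimum, bounded trivially by (length)$\times \Delta_m^{-p}$, and the complementary region, where it changes variables from $s$ to $\Delta$ and bounds the Jacobian $-\tod{s}{\Delta}$ using the inequality $\Delta^2 - M/N \geq \tfrac34 \Delta^2$ valid for $\Delta \geq 2\sqrt{M/N}$. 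Your argument is cleaner and more unified — no domain splitting, no case-specific inequalities, and the $\Delta_m^{1-p}$ scaling emerges transparently from the rescaling — but it leans on the specific algebraic form of this gap. The paper's split-and-invert template is less elegant here but more generic: it only uses monotonicity of $\Delta$ and an estimate on the inverse, which is why essentially the same proof is reused for the QLSP gap in appendix \ref{appendix:QLSPgap}, where the gap is not a pure quadratic well. Both proofs implicitly restrict to $M/N$ bounded away from $1$ (you need $1/\sqrt{c} = O(1)$, the paper assumes $M/N \leq 1/4$), which is the relevant asymptotic regime.
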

\begin{proof}
We note that $\Delta(s)$ is symmetric about $s= 1/2$. It is also strictly decreasing on $[0,1/2]$, going from $1$ to a minimum of $\sqrt{M/N}$.  So we can write
\begin{align}
\int_0^1 \frac{1}{\Delta(s)^p}\diff{s} &= 2\int_0^{1/2} \frac{1}{\Delta(s)^p}\diff{s} \\
&= 2\Big(\int_0^{1/2- \sqrt{M/N}} \frac{1}{\Delta(s)^p}\diff{s} + \int_{1/2- \sqrt{M/N}}^{1/2} \frac{1}{\Delta(s)^p}\diff{s} \Big).
\end{align}
Since $\Delta$ has a minimum of $\sqrt{M/N}$, we can bound the second integral by
\[ \int_{1/2- \sqrt{M/N}}^{1/2} \frac{1}{\Delta(s)^p}\diff{s} \leq \sqrt{\frac{M}{N}}\Big(\frac{1}{\min_{s\in[0,1]}\Delta(s)}\Big)^p = \frac{\sqrt{M/N}}{\sqrt{M/N}^p} = \sqrt{N/M}^{p-1}. \]
For the first integral, we write
\begin{align}
\int_0^{1/2 - \sqrt{M/N}} \frac{1}{\Delta(s)^p}\diff{s} &= \int_1^{\Delta\big(1/2- \sqrt{M/N}\big)} \frac{1}{\Delta^p}\dod{s}{\Delta}\diff{\Delta} \\
&= \int_{\Delta\big(1/2- \sqrt{M/N}\big)}^1 \frac{1}{\Delta^p}\Big(-\dod{s}{\Delta}\Big)\diff{\Delta}.
\end{align}
We can invert \eqref{eq:GroverGap} to obtain $s = \frac{1}{2} - \frac{1}{2}\sqrt{1-\frac{1-\Delta^2}{1-N/M}}$.
Then we have
\begin{equation}
-\od{s}{\Delta} = \frac{\Delta}{2\sqrt{(1-M/N)(\Delta^2 - M/N)}}.
\end{equation}
We now calculate
\[ \Delta\Big(\frac{1}{2} - \sqrt{\frac{M}{N}}\Big) = \sqrt{\frac{M}{N}}\sqrt{5 - 4 \frac{M}{N}} \geq 2\sqrt{\frac{M}{N}}, \]
assuming $M/N \leq 1/4$. So
\begin{align}
\int_0^{1/2 - \sqrt{M/N}} \frac{1}{\Delta^p}\diff{s} &\leq \int_{2\sqrt{\frac{M}{N}}}^1 \frac{1}{\Delta^p}\Big(-\dod{s}{\Delta}\Big)\diff{\Delta} \\
&= \int_{2\sqrt{\frac{M}{N}}}^1 \frac{1}{\Delta^p} \frac{\Delta}{2\sqrt{(1-M/N)(\Delta^2 - M/N)}}\diff{\Delta} \\
&\leq \int_{2\sqrt{\frac{M}{N}}}^1 \frac{1}{\Delta^p} \frac{\Delta}{2\sqrt{(1-M/N)(\Delta^2 - \Delta^2/4)}}\diff{\Delta} \\
&= \frac{1}{\sqrt{3(1-M/N)}} \int_{2\sqrt{\frac{M}{N}}}^1 \frac{1}{\Delta^p} \diff{\Delta}.
\end{align}
Now $\frac{1}{\sqrt{3(1-M/N)}}$ is $O(1)$ and $\int_{2\sqrt{\frac{M}{N}}}^1 \frac{1}{\Delta^p} \diff{\Delta} = \Big[\frac{1}{(p-1)\Delta^{p-1}}\Big]_{2\sqrt{M/N}}^1$ is $O\big(\sqrt{N/M}^{p-1}\big)$, if $p>1$. If $p=1$, then it is $O\big(\log\sqrt{N/M}\big)$.
\end{proof}

\subsection{The gap in QLSP} \label{appendix:QLSPgap}
For the quantum linear system problem we have the following bound on the gap:
\begin{equation} \Delta(s) = \sqrt{(1-s)^2 + \Big(\frac{s}{\kappa}\Big)^2}. \label{LinSysGapAppendix} \end{equation}
If $\kappa$ is large enough, then we can take $\Delta_m \defeq \frac{1}{2\kappa} \leq \sqrt{\frac{1}{\kappa^2 + 1}} = \min_{s\in [0,1]}\Delta(s)$. We provide a proof of lemma \ref{lemma:QLSP}.
\begin{lemma*}
For all $p > 1$, we have
\begin{equation}
\int_0^1 \frac{1}{\Delta(s)^p}\diff{s} = O\big(\kappa^{p-1}\big) = O\big(\Delta_m^{1-p}\big),
\end{equation}
and, for $p=1$,
\begin{equation}
\int_0^1 \frac{1}{\Delta(s)}\diff{s} = O\big(\log(\kappa)\big).
\end{equation}
\end{lemma*}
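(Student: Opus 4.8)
The plan is to isolate the only region where the gap becomes small, namely $s$ near $1$, and to reduce everything to a single elementary integral. First I would split $\int_0^1$ at $s = 1/2$. On $[0,1/2]$ the gap is bounded below by a constant, since $\Delta(s) = \sqrt{(1-s)^2 + s^2/\kappa^2} \geq |1-s| = 1-s \geq 1/2$; hence $\int_0^{1/2} \Delta^{-p}\diff{s} \leq 2^{p-1}$, which is $O(1)$ for both $p > 1$ and $p = 1$ and so contributes nothing to the asymptotics in $\kappa$.

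The real work is confined to $[1/2,1]$. Here I would exploit $s \geq 1/2$ to lower-bound the $\kappa$-dependent term, writing $\Delta(s)^2 = (1-s)^2 + s^2/\kappa^2 \geq (1-s)^2 + 1/(4\kappa^2)$, and then substitute $u = 1-s$ together with $a \defeq 1/(2\kappa) = \Delta_m$. This turns the second piece into $\int_{1/2}^1 \Delta^{-p}\diff{s} \leq \int_0^{1/2}(u^2 + a^2)^{-p/2}\diff{u}$, a one-parameter integral whose only singular behaviour as $\kappa \to \infty$ comes from the lower endpoint $u=0$ combined with the shrinking scale $a$.

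To estimate this integral I would split it dyadically at $u = a$. On $[0,a]$ I bound $(u^2 + a^2)^{p/2} \geq a^p$, giving a contribution of at most $a\cdot a^{-p} = a^{1-p}$. On $[a,1/2]$ I bound $(u^2 + a^2)^{p/2} \geq u^p$, reducing the problem to $\int_a^{1/2} u^{-p}\diff{u}$. For $p>1$ this is at most $\tfrac{1}{p-1}a^{1-p}$, so the region $[1/2,1]$ contributes $O(a^{1-p}) = O(\kappa^{p-1}) = O(\Delta_m^{1-p})$, as claimed. For $p=1$ the same split gives $\int_0^a(u^2+a^2)^{-1/2}\diff{u}\leq 1$ and $\int_a^{1/2} u^{-1}\diff{u} = \log(1/(2a)) = \log\kappa$, yielding $O(\log\kappa)$.

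I do not expect a genuine obstacle here; the plan closely parallels the treatment of the Grover gap in appendix \ref{appendix:GroverGap}. The only points needing care are the choice of where to localise the small gap (near $s=1$, handled by $u = 1-s$) and the observation that $s^2/\kappa^2 \geq (2\kappa)^{-2}$ on the relevant half-interval, which is precisely what regularises the integrand at its minimum. One must also keep the cases $p>1$ and $p=1$ separate, since the elementary integral $\int u^{-p}\diff{u}$ changes character at $p=1$; apart from that the estimates are routine.
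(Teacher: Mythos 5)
Your proof is correct, and it takes a genuinely different route from the paper's. The paper splits the integral at $s^* = 1 - \frac{1}{\kappa^2+1}$, the point where $\Delta$ attains its minimum: the piece on $[s^*,1]$ is bounded by the interval length times $\sup \Delta^{-p}$, and on $[0,s^*]$, where $\Delta$ is strictly decreasing, it changes variables from $s$ to $\Delta$, reducing everything to $\int \Delta^{-p}\diff{\Delta}$ --- the same scheme it uses for the Grover gap. You instead split at $s=1/2$, discard the left half as $O(1)$, and on $[1/2,1]$ use the pointwise bound $\Delta(s)^2 \geq (1-s)^2 + (2\kappa)^{-2}$ followed by a peak/tail estimate of $\int_0^{1/2}(u^2+a^2)^{-p/2}\diff{u}$ split at $u = a = \Delta_m$. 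Each approach has its merits: the paper's change of variables integrates in $\Delta$ directly and keeps sharper constants, but it relies on inverting the gap function, and in fact the inverse stated in the paper, $s = \frac{\kappa^2}{\kappa^2+1}(1-\Delta)$, is not the exact inverse of \eqref{LinSysGapAppendix} (the exact branch carries a square root, $s = \frac{\kappa^2}{\kappa^2+1}\bigl(1 - \sqrt{1 - \tfrac{\kappa^2+1}{\kappa^2}(1-\Delta^2)}\bigr)$), so that step of the paper needs repair, e.g.\ by bounding $-\tod{s}{\Delta}$ rather than computing it exactly. Your argument bypasses inversion and monotonicity entirely and uses only elementary estimates, so it is more robust and self-contained, at the cost of slightly looser constants; it also handles $p>1$ and $p=1$ in one uniform framework, with the case distinction appearing only in the final elementary integral $\int_a^{1/2}u^{-p}\diff{u}$. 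One minor point worth making explicit: your peak/tail split assumes $a = \Delta_m \leq 1/2$, i.e.\ $\kappa \geq 1$, which always holds since $\kappa$ is a condition number (and the paper anyway assumes $\kappa$ large so that $\Delta_m = \frac{1}{2\kappa} \leq \min_s \Delta(s)$).
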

\begin{proof}
We note that $\Delta(s)$ is strictly decreasing on $\Big[0,1- \frac{1}{\kappa^2 + 1}\Big]$, going from $1$ to a minimum of $\sqrt{\frac{1}{\kappa^2 + 1}}$.  So we can write
\[ \int_0^1 \frac{1}{\Delta(s)^p}\diff{s} = \int_0^{1- \frac{1}{\kappa^2 + 1}} \frac{1}{\Delta(s)^p}\diff{s} + \int_{1- \frac{1}{\kappa^2 + 1}}^1 \frac{1}{\Delta(s)^p}\diff{s}. \]
Since $\Delta$ has a minimum of $\sqrt{\frac{1}{\kappa^2 + 1}}$, we can bound the second integral by
\[ \int_{1- \frac{1}{\kappa^2 + 1}}^{1} \frac{1}{\Delta(s)^p}\diff{s} \leq \frac{1}{\kappa^2 + 1}\Big(\frac{1}{\min_{s\in[0,1]}\Delta(s)}\Big)^p = \frac{1}{\kappa^2 + 1}\big(\kappa^2 + 1\big)^{p/2} = \big(\kappa^2 + 1\big)^{p/2-1}. \]
For the first integral, we write
\begin{align}
\int_0^{1 - \frac{1}{\kappa^2+1}} \frac{1}{\Delta^p}\diff{s} &= \int_1^{\Delta\big(1 - \frac{1}{\kappa^2+1}\big)} \frac{1}{\Delta^p}\dod{s}{\Delta}\diff{\Delta} \\
&= \int_{\Delta\big(1 - \frac{1}{\kappa^2+1}\big)}^1 \frac{1}{\Delta^p}\Big(-\dod{s}{\Delta}\Big)\diff{\Delta} \\
&= \int_{\sqrt{\frac{1}{\kappa^2+1}}}^1 \frac{1}{\Delta^p}\Big(-\dod{s}{\Delta}\Big)\diff{\Delta}.
\end{align}
We can invert \eqref{LinSysGapAppendix} on $\Big[0,1- \frac{1}{\kappa^2 + 1}\Big]$ to obtain $s = \frac{\kappa^2}{\kappa^2+1}(1-\Delta)$.
Then we have
\begin{equation}
-\od{s}{\Delta} = \frac{\kappa^2}{\kappa^2+1},
\end{equation}
so
\begin{align}
\int_0^{1 - \frac{1}{\kappa^2+1}} \frac{1}{\Delta^p}\diff{s} &= \int_{\sqrt{\frac{1}{\kappa^2+1}}}^1 \frac{1}{\Delta^p}\frac{\kappa^2}{\kappa^2+1}\diff{\Delta} \\
&= \frac{\kappa^2}{\kappa^2+1}\Big(\frac{1}{(p-1)\Delta^{p-1}}\Big)\Big|^{\Delta = \sqrt{\frac{1}{\kappa^2+1}}}_{\Delta = 1} \\
&= O(\kappa^{p-1}).
\end{align}
If $p=1$, then the integral is $O\big(\log(\kappa)\big)$.
\end{proof}

\end{document}